\theoremstyle{plain}
\newtheorem{remark}{Remark}
\newtheorem{thm}{Theorem}[section]
\newtheorem{cor}[thm]{Corollary}
\newtheorem{prop}[thm]{Proposition}
\newtheorem{conj}[thm]{Conjecture}
\newtheorem{lem}[thm]{Lemma}
\newtheorem{Def}[thm]{Definition}
\newtheorem{obs}[thm]{Observation}
\newcommand{\E}{\mathbb{E}\xspace}
\newcommand{\Cov}{\mathrm{Cov}\xspace}
\newcommand{\eat}[1]{}
\newcommand{\eps}{\epsilon}
\newcommand{\halfplusbestc}{0.527}
\newcommand{\bestc}{0.027}
\newcommand{\approxratio}{1.897}
\newcommand{\calM}{\mathcal{M}}
\newcommand{\calA}{\mathcal{A}}
\newenvironment{wrapper}[1]
{
	\smallskip
	\begin{center}
		\begin{minipage}{\linewidth}
			\begin{mdframed}[hidealllines=true, backgroundcolor=gray!20, leftmargin=0cm,innerleftmargin=0.4cm,innerrightmargin=0.4cm,innertopmargin=0.4cm,innerbottommargin=0.4cm,roundcorner=10pt]
				#1}
			{\end{mdframed}
		\end{minipage}
	\end{center}
	\smallskip
}
\newcommand{\ALGtikzmarkcolor}{black}
\newcommand{\ALGtikzmarkextraindent}{4pt}
\newcommand{\ALGtikzmarkverticaloffsetstart}{-.5ex}
\newcommand{\ALGtikzmarkverticaloffsetend}{-.5ex}
\newcounter{ALG@tikzmark@tempcnta}
\newcommand\ALG@tikzmark@start{%
	\global\let\ALG@tikzmark@last\ALG@tikzmark@starttext%
	\expandafter\edef\csname ALG@tikzmark@\theALG@nested\endcsname{\theALG@tikzmark@tempcnta}%
	\tikzmark{ALG@tikzmark@start@\csname ALG@tikzmark@\theALG@nested\endcsname}%
	\addtocounter{ALG@tikzmark@tempcnta}{1}%
}
\def\ALG@tikzmark@starttext{start}
\newcommand\ALG@tikzmark@end{%
	\ifx\ALG@tikzmark@last\ALG@tikzmark@starttext
	\else
	\tikzmark{ALG@tikzmark@end@\csname ALG@tikzmark@\theALG@nested\endcsname}%
	\tikz[overlay,remember picture] \draw[\ALGtikzmarkcolor] let \p{S}=($(pic cs:ALG@tikzmark@start@\csname ALG@tikzmark@\theALG@nested\endcsname)+(\ALGtikzmarkextraindent,\ALGtikzmarkverticaloffsetstart)$), \p{E}=($(pic cs:ALG@tikzmark@end@\csname ALG@tikzmark@\theALG@nested\endcsname)+(\ALGtikzmarkextraindent,\ALGtikzmarkverticaloffsetend)$) in (\x{S},\y{S})--(\x{S},\y{E});%
	\fi
	\gdef\ALG@tikzmark@last{end}%
}
\apptocmd{\ALG@beginblock}{\ALG@tikzmark@start}{}{\errmessage{failed to patch}}
\pretocmd{\ALG@endblock}{\ALG@tikzmark@end}{}{\errmessage{failed to patch}}
\title{The Greedy Algorithm is \emph{not} Optimal for On-Line Edge Coloring}
\author{Amin Saberi}
\author{David Wajc}
\affil{Stanford University\\
	\{saberi, wajc\}@stanford.edu}
\date{\vspace{-1cm}}
\begin{document}

\maketitle


\begin{abstract}
	
Nearly three decades ago, Bar-Noy, Motwani and Naor showed that no online edge-coloring algorithm can edge color a graph optimally.
Indeed, their work, titled ``the greedy algorithm is optimal for on-line edge coloring'', shows that the competitive ratio of $2$ of the na\"ive greedy algorithm is best possible online.
However, their lower bound required bounded-degree graphs, of maximum degree $\Delta = O(\log n)$, which prompted them to conjecture that better bounds are possible for higher-degree graphs.
While progress has been made towards resolving this conjecture for restricted inputs and arrivals or for random arrival orders, an answer for fully general \emph{adversarial} arrivals remained elusive.

%
We resolve this thirty-year-old conjecture in the affirmative, presenting a $(1.9+o(1))$-competitive online edge coloring algorithm for general graphs of degree $\Delta = \omega(\log n)$ under vertex arrivals.
At the core of our results, and of possible independent interest, is a new online algorithm which rounds a fractional bipartite matching $x$ online under vertex arrivals, guaranteeing that each edge $e$ is matched with probability $(1/2+c)\cdot x_e$, for a constant $c>\bestc$.

%
%
\end{abstract}


\section{Introduction}

An edge coloring of a graph is a decomposition of its edge-set into few vertex-disjoint edge-sets (matchings), or \emph{colors}. 
Edge coloring a graph of maximum degree $\Delta$ trivially requires at least $\Delta$ colors, and this is tight for bipartite graphs, by the century-old result of K\"onig \cite{konig1916graphen}. For general graphs, $\Delta$ colors are not always sufficient (e.g., in odd-length cycles), yet $\Delta+1$ colors are always sufficient, by Vizing's Theorem \cite{vizing1964estimate}. 

Algorithmically matching, or approximating, the optimal $\Delta(+1)$ colors needed to edge color a graph has been the focus of much concentrated effort, for numerous computational models. 
These include offline, online, distributed, parallel, and dynamic algorithms (see, e.g.,  \cite{cole2001edge,cohen2019tight,su2019towards,chang2018complexity,karloff1987efficient,motwani1994probabilistic,duan2019dynamic,charikar2021improved,wajc2020rounding} and references therein).
These different models' specific challenges naturally impose limitations on the attainable approximations.
For example, Holyer's Theorem \cite{holyer1981np} rules out efficient offline algorithms for computing an optimal edge coloring in general graphs, unless \textsc{P}=\textsc{NP}.

For online algorithms, the challenge is in making immediate and irrevocable decisions concerning edges' colors after only part of the input is revealed. 
For example, the input graph can either be revealed edge-by-edge (edge arrivals) or vertex-by-vertex (vertex arrivals), and an online algorithm must assign colors to edges after they are revealed, immediately and irrevocably.
The measure of an online algorithm is its competitive ratio, which is the worst-case ratio of the number of colors used by the algorithm to those of the optimal offline algorithm, namely, $\Delta$ or $\Delta+1$. 

In both the edge-arrival and vertex-arrival settings, a simple greedy algorithm has competitive ratio $2$. The natural question, then, is whether a better online algorithm exists. 
Some thirty years ago, Bar-Noy, Motwani and Naor~\cite{bar1992greedy} showed that this competitive ratio of $2$ is best possible, and no online algorithm (randomized or deterministic) can do better, in either arrival model.

However, noting that their result only holds for bounded-degree $n$-node graphs, of maximum degree $\Delta = O(\log n)$, Bar-Noy et al.~conjectured that better algorithms exist for graphs of sufficiently high maximum degree.

\begin{wrapper}
\begin{conj}[\cite{bar1992greedy}]\label{conj:bmn}
	There exists a $(2-\Omega(1))$-competitive online edge coloring algorithm under vertex arrivals in $n$-node graphs of maximum degree $\Delta =\omega(\log n)$.
\end{conj}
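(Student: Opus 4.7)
The plan is to reduce online edge coloring to the online rounding of fractional bipartite matchings promised in the abstract, exploiting its $(1/2+c)\cdot x_e$ matching guarantee for some $c > \bestc$. Conceptually, if every color class is obtained by rounding a fractional matching in which each still-uncolored edge carries weight $1/\Delta'$ (where $\Delta'$ is the current residual maximum degree), then each edge is colored in that round with probability at least $(1/2+c)/\Delta'$. Iterating with suitable scheduling should spend roughly $2\Delta/(1+2c)\cdot(1+o(1))$ colors in total, which is strictly below $2\Delta$ since $c > 0$.

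Concretely, I would first reduce online to the bipartite, $\Delta$-regular setting, via a standard bipartite doubling and by padding each arriving vertex to degree exactly $\Delta$ using synthetic dummy partners drawn from a reservoir. On such instances the uniform assignment $x_e = 1/\Delta$ on real edges is a valid fractional matching after every arrival. I would then split the colors into \phases: in \phase $k$, having identified the edges still uncolored after \phase $k{-}1$ with residual maximum degree $\Delta_k$, I would devote a fresh batch of $K_k = \alpha\, \Delta_k/(1/2+c)$ colors, each produced by an independent invocation of the online rounding primitive on the residual uniform fractional matching, where $\alpha = o(1)$ is a slowly vanishing parameter. Because every invocation returns a matching, each color class is a proper matching.

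In expectation, \phase $k$ colors each residual edge with probability $1 - e^{-\alpha}(1\pm o(1))$, so $\Delta_{k+1} \approx \Delta_k\, e^{-\alpha}$. After $O(\alpha^{-1}\log(\Delta/\log n))$ \phases the residual maximum degree drops below $\Theta(\log n)$, at which point naive greedy finishes using an extra $o(\Delta)$ colors. Summing $K_k$ over \phases gives a geometric series bounded by $\Delta \cdot \alpha/((1/2+c)(1-e^{-\alpha}))$, which tends to $\Delta/(1/2+c) \approx \approxratio \cdot \Delta$ as $\alpha \to 0$, matching the competitive ratio advertised in the abstract and beating $2$ by $\Omega(1)$ as required by the conjecture.

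The main technical obstacle is establishing tight enough concentration of the residual degrees across \phases. The rounding primitive controls only first moments, namely $\Pr[e\text{ matched}] \geq (1/2+c)\cdot x_e$, whereas the analysis demands that the \emph{number} of uncolored edges incident to each vertex concentrate around its expectation. I would attempt to show that the rounding produces coloring events that are negatively associated at each vertex---a property often enjoyed by matching-based online algorithms---after which a Chernoff bound combined with the $\Delta = \omega(\log n)$ slack would support a union bound over the $n$ vertices and the $O(\alpha^{-1}\log\Delta)$ \phases. If direct negative association is unavailable, I expect to fall back on a more intricate vertex-by-vertex accounting, possibly tracking conditional degree drops via a potential function; this is the step I anticipate being the most delicate, since the online rounding's guarantees are marginal and the independence needed for a clean Chernoff argument has to be extracted from the structure of the primitive itself.
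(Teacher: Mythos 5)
Your high-level plan---repeatedly invoke a $(1/2+c)$-rounding primitive to peel off matchings, organize into phases, and argue the residual maximum degree drops geometrically with enough concentration to union-bound over vertices and phases---is indeed the skeleton the paper uses (its Lemma~\ref{coloring-to-matching}, proved in \Cref{sec:reducing-coloring-to-matching}). But the reduction from general to bipartite graphs in your first paragraph has a genuine hole. ``Bipartite doubling'' (the bipartite double cover $G\times K_2$) is not a valid reduction for edge coloring: the double cover of $K_3$ is $C_6$, which is $2$-edge-colorable, whereas $K_3$ needs $3$ colors; more generally a proper coloring of the cover cannot be pulled back to a proper coloring of $G$, since the two copies $(u_L,v_R)$ and $(v_L,u_R)$ of each edge can receive unrelated colors, and collapsing them reintroduces conflicts. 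The paper instead uses the Karloff--Shmoys reduction (\Cref{random-subgraphs}): independently assign each vertex a random side, run the bipartite algorithm on the random bipartite subgraph (whose maximum degree is $\Delta/2+o(\Delta)$ w.h.p.), and recurse on the complementary edges, yielding $\alpha\Delta/2 + \alpha\Delta/4 + \cdots = (\alpha+o(1))\Delta$ colors. Note also that this reduction produces bipartite graphs with \emph{interleaved} arrivals, which is precisely why the paper needed a new rounding algorithm rather than reusing the one-sided rounding from prior work. Separately, padding to $\Delta$-regularity online is both unnecessary and awkward (you don't know future degrees at a vertex's arrival); the paper simply feeds $x_e = 1/\Delta$ into the rounder, which is a valid fractional matching since $\sum_{u} x_{u,v}\le d(v)/\Delta\le 1$ automatically.

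On concentration: you anticipate needing negative association across the coloring events at a fixed vertex, but the paper avoids this entirely with a cleaner device. Each phase consists of independent invocations of the matching primitive, and the key observation is that no matter what has happened so far, a vertex that started the phase with degree near $\Delta'$ still has at least $\Delta'-O(L)$ uncolored edges, so the conditional probability it gets an edge colored in the next invocation is at least $(1-\eps)/\alpha$ \emph{regardless of history}. A standard stochastic-domination/coupling lemma (\Cref{coupling}) then lets one compare to independent Bernoullis and apply Chernoff. This sidesteps any need to extract correlation structure from the rounding primitive itself; your ``fall back on a more intricate vertex-by-vertex accounting'' is closer to the truth, but in a simpler form than you expect.
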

\end{wrapper}

Bar-Noy et al.~conjectured that the same holds under the more challenging edge-arrival model, and that moreover a $(1+o(1))$-competitive algorithm exists. 
These conjectures remain out of reach, though progress has been made on them over the years. 
For edge arrivals, a positive resolution of the stronger conjecture was achieved under the assumption of \emph{random order} arrivals, where the input is generated adversarially, but its arrival order is randomly permuted by nature \cite{aggarwal2003switch,bahmani2012online,bhattacharya2021online}.
For adversarial vertex arrivals, Cohen et al.~\cite{cohen2019tight} showed that for \emph{bipartite} graphs under one-sided vertex arrivals (vertices of one side are given, and the other side's vertices arrive), the conjectured $(1+o(1))$-competitive ratio is achievable for $\Delta=\omega(\log n)$.
Whether the competitive ratio of $2$ of the greedy algorithm is optimal under \emph{general} vertex arrivals, in \emph{general} graphs, however, remained open.

We answer the above open question, resolving \Cref{conj:bmn} in the affirmative.

\begin{wrapper}
\begin{restatable}{thm}{greedysubopt}\label{thm:greedy-suboptimal}
	There exists an online edge coloring algorithm which is $(\approxratio+o(1))$-competitive w.h.p.~on general $n$-node graphs with maximum degree $\Delta = \omega(\log n)$ under vertex arrivals.
\end{restatable}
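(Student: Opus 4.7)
My plan is to combine the online $(1/2+c)$-rounding for fractional bipartite matchings---the paper's key primitive---with two reductions: a geometric phase iteration that amplifies a $(1/2+c)$-lossy rounding into an online edge coloring of a bipartite graph with $(\approxratio + o(1))\,d$ colors, and a random bipartization that reduces general graphs under vertex arrivals to the bipartite case.

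I would first handle bipartite graphs under two-sided vertex arrivals. For a bipartite graph of max degree $d = \omega(\log n)$, I plan to maintain online a fractional edge coloring $\{x^{(k)}\}_k$ of size $(1+\epsilon)d$---for instance via a water-filling scheme extended from the one-sided arrival setting---and apply the new $(1/2+c)$-rounding independently to each color's fractional matching. Any given edge is then successfully colored in this palette with probability at least $(1/2+c)$, and uncolored edges proceed to a fresh palette in the next phase. Chernoff-type concentration along the online filtration together with a union bound (both legitimate because $d = \omega(\log n)$) shows that the residual max degree after $t$ phases is at most $d(1/2 - c + o(1))^t$ w.h.p. After $T = O(\log(d/\log n))$ phases the residual drops to $O(\log n)$ and is finished greedily at cost $o(d)$, giving total palette size
\[
(1+\epsilon)\,d\cdot\sum_{t\geq 0}(1/2-c)^t + O(\log n) \;=\; \frac{1+\epsilon}{1/2+c}\,d + o(d) \;=\; (\approxratio + o(1))\,d.
\]

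For a general graph $G$ of max degree $\Delta$, I would reduce to the bipartite case as follows. When each vertex arrives, flip an independent fair coin to label it $L$ or $R$. The crossing edges $G_{LR}$ form a bipartite graph with max degree $(\tfrac12+o(1))\Delta$ w.h.p., to which the previous step applies using $(\approxratio+o(1))\Delta/2$ colors. The remaining edges decompose into vertex-disjoint subgraphs $G_{LL}$ and $G_{RR}$, each of max degree at most $(\tfrac12+o(1))\Delta$; I would recurse on both in parallel, letting them share a fresh color palette since they touch disjoint vertex sets. The recursion runs for $O(\log(\Delta/\log n))$ levels until the residual max degree is $O(\log n)$ and a greedy coloring uses $o(\Delta)$ further colors. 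Summing,
\[
(\approxratio + o(1))\cdot\left(\tfrac{\Delta}{2}+\tfrac{\Delta}{4}+\tfrac{\Delta}{8}+\cdots\right) + o(\Delta) \;=\; (\approxratio + o(1))\Delta.
\]

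The hard part will be the dependent concentration needed in the phase analysis: the residual graph after each phase depends on the random bits used by the $(1/2+c)$-rounding in earlier phases and on the random bipartization, so proving that each vertex's residual degree concentrates near its expectation w.h.p.\ requires a careful bounded-difference or Doob-martingale argument aligned with the online filtration. A secondary subtlety is maintaining each phase's online fractional edge coloring when residual degrees are only discovered as vertices arrive; I expect to handle this by committing to a color schedule in advance and arguing that the slack $\epsilon\,d_t$ in phase-$t$ palette sizes absorbs both the concentration error and the imperfect knowledge of the residual.
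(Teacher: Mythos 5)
Your random bipartization step is essentially the paper's Lemma~\ref{random-subgraphs} (Karloff--Shmoys): flip a fair coin per vertex, color $G_{LR}$ with the bipartite algorithm, and recurse on $G_{LL}\cup G_{RR}$; recursing ``on both in parallel with shared palettes'' is the same thing since they are vertex-disjoint. That part is fine.

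The gap is in your bipartite phase iteration. You claim that rounding each color $x^{(k)}$ of a fractional edge coloring \emph{independently} colors each edge in that palette with probability at least $1/2+c$. That inequality points the wrong way. If $\sum_k x^{(k)}_e = 1$ and the roundings are independent, then
\[
\Pr[e\text{ colored in some }k] \;=\; 1-\prod_k\bigl(1-(1/2+c)\,x^{(k)}_e\bigr) \;\leq\; (1/2+c)\sum_k x^{(k)}_e \;=\; 1/2+c,
\]
with equality only when $e$'s mass sits on a single color; for a typical spread-out fractional edge coloring (e.g.\ $x^{(k)}_e \approx 1/K$) the probability is closer to $1-e^{-(1/2+c)}\approx 0.41$. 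Plugging the correct bound into your geometric sum gives roughly $\sum_{t\ge 0}(1-0.41)^t \approx 2.44$, i.e.\ worse than greedy, so the ``one full palette per phase, residual degree halves'' scheme cannot yield $\approxratio$. The paper's Lemma~\ref{coloring-to-matching} sidesteps exactly this loss by making phases much shorter: it runs the matching primitive only $O(\sqrt{\Delta\log n})$ times per sub-phase, always on the \emph{residual} uncolored graph with fractional value $1/\Delta'$ per surviving edge, so a vertex's uncolored degree barely moves within a sub-phase and each application still colors one of its edges with probability $\approx 1/\alpha$. This yields arithmetic degree decay---roughly one unit of degree per $\alpha$ colors---avoiding the $1-e^{-x}$ versus $x$ penalty that a palette-per-phase geometric scheme inherently pays. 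Your filtration/Doob-martingale concentration concerns are real but secondary; the primary fix is to replace the geometric big-phase accounting with the small-residual-sub-phase accounting.
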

\end{wrapper}

\begin{remark}\label{remark1}
	For general $\Delta$, the $o(1)$ term in the above theorem is of the form $\sqrt[\gamma]{\log n /\Delta}$, for some constant $\gamma > 0$. This implies a better than two approximation ratio for sufficiently large $\Delta=O(\log n)$. For simplicity of exposition, we do not elaborate on this point.
\end{remark}

\subsection{Techniques}
To obtain our results, we combine and extend several previous algorithmic ideas. 

Our starting point is the following natural recursive approach, due to Karloff and Shmoys \cite{karloff1987efficient}, which reduces edge coloring a general graph $G$ to edge coloring random bipartite subgraphs. 
Their idea was to assign each vertex to either side of a random subgraph uniformly, resulting in a bipartite subgraph $H$ of $G$ with maximum degree $\Delta/2+o(\Delta)$ for $\Delta=\omega(\log n)$, by standard tail bounds. 
Consequently, applying an $\alpha$-approximate algorithm to the random bipartite graph and recursing on the remaining edges is easily shown to result in an edge coloring using $\alpha\cdot \Delta/2 + o(\Delta) + \alpha\cdot \Delta/4 + o(\Delta) \dots = \alpha\cdot \Delta + o(\Delta)$ colors.
Importantly for us, this approach, originally used in the context of NC algorithms by \cite{karloff1987efficient}, is implementable online, by sampling the random bipartitions in advance. (See \Cref{sec:karloff-shmoys}.)

At this point, one might be tempted to use the online algorithm of Cohen et al.~\cite{cohen2019tight} for these random bipartite subgraphs. Unfortunately, the reduction of Karloff and Shmoys \cite{karloff1987efficient} applied to online edge coloring with general vertex arrivals requires an online algorithm for bipartite graphs with \emph{interleaved} arrivals, and not one-sided arrivals, as handled by \cite{cohen2019tight}. To instantiate the Karloff-Shmoys approach, we therefore present a $(2-c)$-competitive edge coloring algorithm for interleaved vertex arrivals in bipartite graphs, which, when combined with the approach of \cite{karloff1987efficient}, then extends to general graphs.

To obtain an edge-coloring algorithm for bipartite graphs under interleaved vertex arrival, we extend the approach of Cohen et al.~\cite{cohen2019tight}, who showed that an $(\alpha+o(1))$-competitive edge coloring can be achieved by repeatedly applying a matching algorithm which matches each edge with probability $(1/\alpha)/\Delta$. 
For each vertex of degree $\Delta(1-o(1))$, such a matching results in $v$ being matched with probability $(1/\alpha)\cdot (1-o(1))$. Repeating the above a super-logarithmic number of times (making use of $\Delta=\omega(\log n)$) therefore decreases the maximum degree of the graph at a rate of roughly one per $\alpha$ colors used.
Cohen et al.~used this approach with $\alpha=1+o(1)$, using an online matching algorithm from \cite{cohen2018randomized}, on bipartite graphs under one-sided arrivals. 
We observe that this approach extends to arbitrary $\alpha$ and any arrival model, including interleaved vertex arrivals in bipartite graphs. (See \Cref{sec:reducing-coloring-to-matching}.)

Motivated by the above discussion, we design an online matching algorithm for bipartite graphs under interleaved arrivals, which matches each edge with probability $(1/2+c)/\Delta$, for some constant $c>0$. 
More generally, and of possible independent interest, we design an online rounding algorithm for bipartite fractional matchings under interleaved vertex arrivals, with a multiplicative factor of $1/2+c$.
That is, we show how, given a bipartite graph $G$ and a fractional matching $x$ in $G$ revealed vertex-by-vertex, one can output a randomized matching which matches each edge $e$ in $G$ with probability $(1/2+c)\cdot x_e$. 
This extends a similar online rounding algorithm previously developed by the authors with Papadimitriou and Pollner \cite{papadimitriou2021online} in the context of online stochastic optimization, but which only works under one-sided vertex arrivals, and is therefore insufficient for our needs.
This new rounding algorithm is the technical meat of this paper, and is presented in \Cref{sec:rounding}.

Combining the above, we obtain \Cref{thm:greedy-suboptimal}, and the positive resolution of \Cref{conj:bmn}.

\subsection{Related Work}\label{sec:related-work}
The first positive results for online edge coloring were under random order edge arrivals.
In this setting, Aggarwal et al.~\cite{aggarwal2003switch} showed that a $(1+o(1))$-competitive ratio is achievable in dense multigraphs with maximum degree $\Delta=\omega(n^2)$. Bahmani et al.~\cite{bahmani2012online} then showed that the greedy algorithm is sub-optimal for any graph of  maximum degree $\Delta= \omega(\log n)$. Achieving the best of both these results, Bhattacharya et al.~\cite{bhattacharya2021online} recently obtained a $(1+o(1))$-competitive algorithm for graphs of maximum degree $\Delta=\omega(\log n)$.
As stated above, the only prior algorithm which outperforms the greedy algorithm under \emph{adversarial} arrivals is the algorithm of Cohen et al.~\cite{cohen2019tight} for bipartite graphs under one-sided vertex arrivals. 
In this work, we remove the assumption of bipartiteness and one-sided arrivals, and show how to outperform greedy in general graphs under arbitrary vertex arrivals.

Our work also ties into the long line of work on online matching, initiated by Karp, Vaizrani and Vazirani \cite{karp1990optimal}. (See e.g., \cite{feng2021two,huang2020fully,gamlath2019online,fahrbach2020edge,naor2018near,ashlagi2019edge} and references therein and \cite{mehta2013online} for a survey of earlier work.) 
Historically, most research on online matching considered bipartite graphs with one-sided arrivals, due to applications in Internet advertising \cite{mehta2007adwords,feldman2009online2}.
A recent line of work considers such problems subject to interleaved vertex arrivals (motivated by more dynamic two-sided markets), as well as vertex arrivals in general graphs \cite{huang2020fully,huang2020fully2,gamlath2019online,ashlagi2019edge,wang2015two}. 
Our rounding algorithm for bipartite graphs with interleaved arrivals adds to the list of tools for tackling problems in this space.

Few of the works in the online (bipartite) matching literature rely on randomized rounding. At first blush, this seems surprising, given the integrality of the bipartite fractional matching polytope, and the multitude of competitive fractional algorithms for problems in this area \cite{kalyanasundaram2000optimal,feldman2009online2,huang2020fully,huang2020fully2,buchbinder2007online,wang2015two}.
However, as pointed out in \cite{devanur2013randomized} and elaborated upon in \cite{cohen2018randomized}, lossless rounding of a fractional matching $x$ is impossible in online settings. In particular, outputting a matching $\calM$ which matches each edge $e$ in a bipartite graph with probability $\Pr[e\in \calM]=x_e$ is impossible in online settings, though it is easy to do offline. A natural question, then, is what is the highest value of $\alpha<1$ for which one can guarantee $\Pr[e\in \calM]\geq \alpha\cdot x_e$ when rounding bipartite fractional matchings online.
The batched OCRS of Ezra et al.~\cite{ezra2020online} gives $\alpha=1/2$, unfortunately too low for our purposes. 
In prior work  \cite{papadimitriou2021online}, motivated by a variation of the online Bayesian selection problem, we improve this bound to $\alpha=0.51$, though only for one-sided arrivals, which is insufficient for our needs here.
In this work we generalize this result, achieving a slightly higher $\alpha=\halfplusbestc$, subject to \emph{interleaved} vertex arrivals.
\section{Preliminaries}\label{sec:prelims}

The underlying (a priori unknown) input to our problem is an $n$-node graph $G=(V,E)$ of maximum degree $\Delta$ (with $n$ and $\Delta$ both known). The vertices of $G$ are revealed over time. For notational convenience, we associate the $n:=|V|$ vertices with the numbers in $[n]$ by order of appearance, and denote by $u<v$ the fact that $u$ arrives before $v$. 
When a vertex $v$ arrives (at \emph{time} $v$), all its edges $(u,v)$ to its previously-arrived neighbors $u<v$ are revealed. 
After $v$ arrives, and before arrival of vertex $v+1$, an online edge coloring algorithm must decide, irrevocably, which color to assign to all edges $(u,v)$ with $u<v$.
The objective is to minimize the number of distinct colors used.

As outlined in the introduction, we will rely on the ability to edge color general graphs by recursively coloring random bipartite subgraphs, as first proposed by Karloff and Shmoys \cite{karloff1987efficient}, in the context of NC algorithms.
The extension and proof for online settings is essentially the same, and is provided, for completeness, in \Cref{sec:karloff-shmoys}.

\begin{restatable}{lem}{karloffshmoys}(Implied by \cite{karloff1987efficient})\label{random-subgraphs}
	Given an online edge coloring algorithm which is $\alpha$-competitive w.h.p.~on bipartite graphs of maximum degree $\Delta = \omega(\log n)$ under interleaved vertex arrivals, there exists an online edge coloring algorithm which is $(\alpha+o(1))$-competitive w.h.p.~on \emph{general} graphs of maximum degree $\Delta=\omega(\log n)$ under vertex arrivals.
\end{restatable}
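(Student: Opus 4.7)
The plan is to adapt the Karloff--Shmoys recursive bipartition in an online fashion, sampling the random 2-coloring of vertices as they arrive. Concretely, when vertex $v$ arrives, flip an independent fair coin to place $v$ in side $L$ or side $R$. This partitions the edges of $G$ into three edge-disjoint subgraphs that are revealed online along with $v$: a bipartite subgraph $H$ between $L$ and $R$ (revealed under \emph{interleaved} vertex arrivals), and two ``monochromatic'' subgraphs $G_L,G_R$ on the vertex sets $L$ and $R$ respectively (revealed under general vertex arrivals, on disjoint vertex sets). We edge-color $H$ using the hypothesized $\alpha$-competitive bipartite algorithm, and recursively apply the same construction to $G_L$ and $G_R$. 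Because $G_L$ and $G_R$ have disjoint vertex sets, the colors used for the two recursive calls can overlap, so the total number of colors used at this recursion level is (colors for $H$) $+ \max\{\text{colors for }G_L,\text{colors for }G_R\}$.

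The analysis rests on two standard facts, each provable by a Chernoff bound and a union bound over the $n$ vertices. First, whenever the current max degree $\Delta'$ satisfies $\Delta'=\omega(\log n)$, each of the three subgraphs $H,G_L,G_R$ has max degree at most $\Delta'/2 + O(\sqrt{\Delta'\log n}) = (\Delta'/2)(1+o(1))$ w.h.p. Second, by the hypothesis applied to $H$ (which is revealed under interleaved vertex arrivals), edge-coloring $H$ uses at most $\alpha\cdot(\Delta'/2)(1+o(1))$ colors w.h.p. We halt the recursion once the current max degree drops below some threshold $\Delta_0$ which is still $\omega(\log n)$ but only $o(\Delta)$ (say $\Delta_0 = \sqrt{\Delta\log n}$, or $\log^2 n$ if $\Delta$ is only barely super-logarithmic), and finish off with the greedy algorithm, spending $2\Delta_0 = o(\Delta)$ additional colors.

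Letting $T(\Delta')$ denote the number of colors used on an instance with max degree $\Delta'$, the recurrence is
\[
T(\Delta') \;\leq\; \alpha\cdot(\Delta'/2)(1+o(1)) \;+\; T\bigl((\Delta'/2)(1+o(1))\bigr),
\]
with base case $T(\Delta_0)\leq 2\Delta_0$. Unrolling the geometric recurrence yields $T(\Delta)\leq \alpha\Delta(1+o(1)) + 2\Delta_0 = (\alpha+o(1))\Delta$, as desired. The main thing to verify is that the $o(1)$ slack does not compound badly across the $O(\log(\Delta/\Delta_0))$ recursion levels: since at every level the current max degree is still $\omega(\log n)$, each multiplicative factor is $(1+O(\sqrt{\log n/\Delta'}))$, and their product over the logarithmically many levels remains $1+o(1)$. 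A single union bound over the $O(\log \Delta)$ levels and the at most $2^{O(\log \Delta)}$ subinstances at each level absorbs all the failure probabilities while keeping the overall success probability high, since each individual Chernoff-type event fails with probability $n^{-\omega(1)}$.
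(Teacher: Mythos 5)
Your proposal is correct and follows the same Karloff--Shmoys recursion as the paper: sample a uniformly random bipartition online, color the crossing bipartite subgraph with the hypothesized algorithm, recurse on the remaining edges, halt once the max degree drops to $o(\Delta)$ (while still $\omega(\log n)$), and finish with greedy; in all cases the geometric decay of degrees plus Chernoff and union bounds gives $(\alpha+o(1))\Delta$ colors w.h.p. The only cosmetic difference is that you recurse on $G_L$ and $G_R$ as two separate subinstances and note their palettes may be reused, whereas the paper recurses once on the single graph $G\setminus H=G_L\cup G_R$ (whose max degree is $\max\{\Delta(G_L),\Delta(G_R)\}$, so the two framings give identical color counts and both rely on the same union bound over polynomially many vertex/level events).
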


The following lemma, implied by the recent work of Cohen et al.~\cite{cohen2019tight}, reduces $\alpha$-competitive edge coloring to online matching algorithms which match each edge with probability $(1/\alpha)/\Delta$. The proof is is provided, for completeness, in \Cref{sec:reducing-coloring-to-matching}.

\begin{restatable}{lem}{coloringtomatching}\label{coloring-to-matching}(Implied by \cite{cohen2019tight})
	Let $\calA$ be an online matching algorithm which on any (bipartite) graph of maximum degree $\Delta\leq \Delta'$ under vertex arrivals, matches each edge with probability at least $1/(\alpha\Delta')$. Then, there exists an online edge coloring algorithm $\calA'$ which is $(\alpha+o(1))$-competitive w.h.p.~for (bipartite) graphs of maximum degree $\Delta = \omega(\log n)$ under vertex arrivals.
\end{restatable}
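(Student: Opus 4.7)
The plan is to follow the multi-phase reduction of Cohen et al.~\cite{cohen2019tight}, repeatedly invoking the matching algorithm $\calA$ to produce successive matchings, each of which receives a fresh color. The algorithm proceeds in \emph{phases}; phase $i$ starts with an upper bound $\Delta_i$ on the maximum degree of the subgraph $G_i$ of still-uncolored edges (with $\Delta_1=\Delta$), and within it we run $T_i := \lceil t\,\alpha\,\Delta_i \rceil$ sequential instances of $\calA$ on $G_i$, for a parameter $t=o(1)$. To make this online, at each vertex arrival $v$ we process the $T_i$ instances in order: instance $j$ sees $v$ together with those incident edges to earlier neighbors that have not yet been colored by instances $1,\dots,j-1$, and applies $\calA$'s online rule to decide which (at most one) to match, using a fresh color.

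Since the subgraph fed to instance $j$ is a subgraph of $G_i$ of maximum degree at most $\Delta_i$, $\calA$'s hypothesis (with $\Delta'=\Delta_i$) guarantees that each still-uncolored edge is matched in that instance with probability at least $1/(\alpha \Delta_i)$, conditional on the history. Iterating across the $T_i$ instances with independent randomness yields, for every edge $e$ alive at the start of phase $i$,
\[
\Pr[e\text{ still uncolored after phase }i] \;\le\; \bigl(1 - 1/(\alpha\Delta_i)\bigr)^{T_i} \;\le\; e^{-t}.
\]
For each vertex $v$ of current degree $d_v\le\Delta_i$, its degree at the end of the phase differs from $v$'s starting degree by $M_v:=\sum_{j=1}^{T_i}\mathds{1}[v\text{ matched in instance }j]$. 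Revealing the instances one at a time gives a Doob martingale with bounded differences ($|M_v^{(j)} - M_v^{(j-1)}|\le 1$), so Azuma's inequality, together with a union bound over $n$ vertices and the $O(t^{-1}\log\Delta)$ phases, shows $\Delta_{i+1} \le \Delta_i\,e^{-t}\,(1+o(1))$ w.h.p., provided $\Delta_i=\omega(\log n)$.

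Parameter tuning then gives the claimed guarantee: the total number of colors used within phases is
\[
\sum_i T_i \;\le\; t\,\alpha\,\sum_i \Delta_i + O(t^{-1}\log\Delta) \;\le\; \alpha\Delta\cdot\frac{t}{1-e^{-t}}\cdot(1+o(1)) \;=\; \alpha\Delta(1+O(t)),
\]
using that $\{\Delta_i\}$ is essentially a geometric sequence with ratio $e^{-t}$. We terminate the phased procedure once $\Delta_i$ drops to a threshold $\Delta^\star$ with $\omega(\log n)=\Delta^\star=o(\Delta)$, and color the rest greedily using at most $2\Delta^\star=o(\Delta)$ additional colors; choosing $t=o(1)$ (of the form in \Cref{remark1}) yields total $(\alpha+o(1))\Delta$ colors w.h.p.

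The main obstacle is the concentration step: the edges incident to $v$ within a single instance of $\calA$ are correlated (at most one of them can be matched), and the subgraph seen by later instances is a complicated random function of earlier instances' decisions. The martingale formulation above avoids conditioning on the full dependence structure by tracking only the per-instance increment $M_v^{(j)}-M_v^{(j-1)}\in\{0,1\}$, which is enough for Azuma; the quantitative trade-off between the concentration slack $O(\sqrt{T_i\log n})$ and the multiplicative decay $\Delta_i e^{-t}$ is precisely what pins down the permissible range of $t$, and hence the $o(1)$ rate of competitiveness loss.
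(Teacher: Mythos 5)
Your approach is essentially the paper's: peel off matchings by running $\calA$ repeatedly in phases, argue that the uncolored degree drops at a rate of roughly one per $\alpha$ fresh colors via a concentration bound, and greedily color the $o(\Delta)$-degree residual. The two cosmetic differences are that you use a multiplicative decay schedule $\Delta_{i+1}\approx\Delta_i e^{-t}$ where the paper uses an additive one $\Delta_{i+1}=\Delta_i - L(1-\eps)^2$ with $L=12\sqrt{\Delta\log n}$, and you invoke Azuma where the paper observes that the per-instance match indicator $X_j$ for $v$ satisfies $\Pr[X_j=1\mid\text{history}]\ge(1/\alpha)(1-\eps)$ \emph{deterministically} (because at most $\lceil\alpha L\rceil$ of $v$'s edges can be colored during the $\lceil\alpha L\rceil$ instances, so the uncolored degree stays $\ge\Delta'-4L$), couples to independent Bernoullis, and applies Chernoff. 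One small imprecision worth noting: $M_v^{(j)}=\sum_{k\le j}X_k$ is not itself a Doob martingale; the Azuma argument should be run on the centered sums $\sum_{k\le j}\bigl(X_k-\E[X_k\mid\text{history}_{<k}]\bigr)$, and you then still need the same deterministic per-instance lower bound on $\E[X_k\mid\text{history}_{<k}]$ to convert concentration around a history-dependent center into the stated bound $\Delta_{i+1}\le\Delta_i e^{-t}(1+o(1))$ — that deterministic bound is exactly what the paper's coupling step makes explicit.
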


Motivated by \Cref{coloring-to-matching}, we show how to (approximately) round fractional matchings online. 
These are assignments of nonnegative $x_e\geq 0$ to edges $e\in E$, satisfying the fractional matching constraint, $\sum_{e\ni v} x_e\leq 1$ for all $v\in V$. 
This is a fractional relaxation of the matching constraint, which stipulates that the degree of any vertex in a matching be at most one.
Fittingly, we refer to $\sum_{w<v} x_{u,w}$ as the \emph{fractional degree} of $u$ before arrival of $v$ (or at its arrival time, if $u=v$).
We shall show how to round fractional matchings up to a multiplicative error of $\alpha<2$. This rounding subroutine applied to the fractional matching assigning value $1/\Delta$ to each edge of the graph thus matches each edge with probability $1/(\alpha\Delta)$. Combined with lemmas \ref{random-subgraphs} and \ref{coloring-to-matching}, this yields our $(\alpha+o(1))\Delta$ coloring algorithm.

\subsection{Negative Association}
\label{sec:prelimNA} 

In our work we will need to bound positive correlations between variables. At the core of these proofs will be a use of \emph{negatively associated} random variables. 
This section introduces this notion of negative dependence and its properties which we use.

\begin{Def}[\cite{khursheed1981positive,joag1983negative}]\label{def:NA}
	Random variables $X_1,\dots,X_n$ are \emph{negatively associated (NA)} if every two monotone nondecreasing functions $f$ and $g$ defined on disjoint subsets of the variables in $\vec{X}$ are negatively correlated. That is,
	\begin{equation}\label{eq:NA}
	\E[f\cdot g] \leq \E[f]\cdot \E[g].
	\end{equation}
\end{Def}

The following simple example of NA variables will prove useful for us.

\begin{prop}[0-1 Principle \cite{dubhashi1996balls}]\label{0-1-NA}
	Let $X_1,\dots,X_n\in \{0,1\}$ be binary random variables satisfying $\sum_i X_i\leq 1$ always. Then, the variables $X_1,\dots,X_n$ are NA.
\end{prop}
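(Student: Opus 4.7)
The plan is to prove the inequality $\E[fg]\le \E[f]\E[g]$ directly from the definition of NA, exploiting the fact that the constraint $\sum_i X_i\le 1$ forces the joint support to sit on $\{\vec 0, e_1,\dots,e_n\}$ (the zero vector and the standard basis vectors). Fix disjoint index sets $I,J\subseteq [n]$ and monotone nondecreasing functions $f$ of $\{X_i:i\in I\}$ and $g$ of $\{X_j:j\in J\}$. The key observation is the following almost-sure dichotomy: on every outcome, either all $X_i$ with $i\in I$ are $0$, or all $X_j$ with $j\in J$ are $0$ (both options may hold simultaneously), since at most one variable can be $1$ and $I\cap J=\emptyset$.

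First I would normalize the two functions. Let $f_0:=f(\vec 0)$ and $g_0:=g(\vec 0)$ denote the values of $f,g$ when all their arguments are zero, and define $f':=f-f_0$ and $g':=g-g_0$. By monotonicity these shifted functions are pointwise nonnegative, and moreover $f'$ vanishes on the event $\{X_i=0 \text{ for all } i\in I\}$ while $g'$ vanishes on $\{X_j=0 \text{ for all } j\in J\}$.

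Next I would expand
\[
\E[fg]-\E[f]\E[g] \;=\; \E[(f'+f_0)(g'+g_0)] - \E[f'+f_0]\,\E[g'+g_0] \;=\; \E[f'g'] - \E[f']\E[g'],
\]
where the cross terms $f_0\E[g']$, $g_0\E[f']$ and the constant $f_0g_0$ cancel on both sides. The heart of the proof is then to observe that $f'g'\equiv 0$ pointwise: by the dichotomy above, at every outcome at least one of the two factors is forced to be zero, so their product is identically zero and hence $\E[f'g']=0$. Combined with $\E[f']\E[g']\ge 0$ (both factors are expectations of nonnegative quantities), this yields $\E[fg]-\E[f]\E[g]\le 0$, as required.

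There is no real obstacle here; the only subtle step is recognizing the pointwise vanishing of $f'g'$, which is a direct consequence of the $\sum_i X_i\le 1$ constraint together with the disjointness of $I$ and $J$. Everything else is simple bookkeeping with the shift trick that reduces monotone functions to nonnegative ones vanishing at $\vec 0$.
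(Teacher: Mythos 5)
The paper states this as a known result, citing \cite{dubhashi1996balls}, and supplies no proof of its own, so there is no in-paper argument to compare against. Your proof is correct and is essentially the standard one: shifting by the values at the origin reduces to nonnegative monotone $f',g'$ that vanish on the all-zero event, the constraint $\sum_i X_i\le 1$ together with disjointness of the index sets forces $f'g'\equiv 0$ pointwise, and the bookkeeping identity $\E[fg]-\E[f]\E[g]=\E[f'g']-\E[f']\E[g']=-\E[f']\E[g']\le 0$ finishes it. No gaps.
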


Negative association is closed under several operations, allowing to construct more elaborate NA distributions from simpler NA distributions as above (see \cite{khursheed1981positive,joag1983negative,dubhashi1996balls}).
\begin{prop}[Independent Union]\label{NA:ind-union}
	Let $X_1,\dots,X_n$ be NA and $Y_1,\dots,Y_m$ be NA, with $\{X_i\}_i$ independent of $\{Y_j\}_j$. Then, the variables $X_1,\dots,X_n,Y_1,\dots,Y_m$ are all NA.
\end{prop}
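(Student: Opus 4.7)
The plan is the standard two-step conditioning argument: first fix the $X$'s and exploit NA of the $Y$'s, then take the outer expectation and exploit NA of the $X$'s. Let me describe the setup and the steps.

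Let $f$ and $g$ be monotone nondecreasing functions defined on disjoint subsets of $\{X_1,\dots,X_n,Y_1,\dots,Y_m\}$. Split these index sets according to which family each variable belongs to: write $f = f(X_A, Y_B)$ and $g = g(X_C, Y_D)$, where $A,C\subseteq[n]$ are disjoint and $B,D\subseteq[m]$ are disjoint. (Either of $A,B,C,D$ may be empty, in which case $f$ or $g$ simply does not depend on that family, and the relevant step below becomes trivial.) Define
\[
F(X) := \E[f\mid X] = \E_{Y}\bigl[f(X_A,Y_B)\bigr], \qquad G(X) := \E[g\mid X] = \E_{Y}\bigl[g(X_C,Y_D)\bigr],
\]
where the inner expectations are with respect to the independent family $\{Y_j\}_j$. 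The key observation is that because $\{Y_j\}_j$ is independent of $\{X_i\}_i$, $F$ is a function of $X_A$ only, and $G$ is a function of $X_C$ only, and both $F,G$ are monotone nondecreasing in their respective arguments (pointwise monotonicity of $f,g$ is preserved by integrating against a product-independent measure).

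Step one: conditional on $X$, the two functions $f(X_A,\cdot)$ and $g(X_C,\cdot)$ are monotone nondecreasing functions of the disjoint sets of variables $Y_B$ and $Y_D$. Applying NA of $Y_1,\dots,Y_m$ (and using independence of the $Y$'s from $X$ to treat the fixed $X$'s as constants), we obtain
\[
\E\bigl[fg \mid X\bigr] \;\leq\; \E\bigl[f\mid X\bigr]\cdot \E\bigl[g\mid X\bigr] \;=\; F(X)\,G(X).
\]
Step two: $F(X)$ and $G(X)$ are monotone nondecreasing in disjoint subsets $X_A, X_C$ of the $X$-variables, so NA of $X_1,\dots,X_n$ yields $\E[F(X)\,G(X)] \leq \E[F(X)]\cdot \E[G(X)] = \E[f]\cdot \E[g]$. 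Taking outer expectations in the bound from step one and chaining gives
\[
\E[fg] \;=\; \E\bigl[\E[fg\mid X]\bigr] \;\leq\; \E\bigl[F(X)\,G(X)\bigr] \;\leq\; \E[f]\cdot \E[g],
\]
which is exactly inequality~\eqref{eq:NA} for the combined family, completing the proof.

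There is no real obstacle here; the only point requiring a moment of care is verifying that $F$ and $G$ are indeed monotone in the appropriate $X$-coordinates, which uses crucially that $\{X_i\}_i\perp\{Y_j\}_j$ so that the conditional law of $Y$ given $X$ does not depend on $X$, and hence integration over $Y$ preserves coordinatewise monotonicity of $f$ and $g$ in the $X$-arguments.
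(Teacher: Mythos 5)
Your proof is correct and is the standard conditioning argument for this closure property (it matches Property P7 in Joag-Dev and Proschan's 1983 paper, which the text cites); the paper itself states the proposition as a known fact from the literature without reproducing a proof. The two key points you flag — that $F=\E[f\mid X]$ depends only on $X_A$ and is monotone because independence keeps the conditional law of $Y$ fixed as $X$ varies, and that step one applies NA of the $Y$'s with the $X$-coordinates frozen — are exactly the places where the argument could silently go wrong, and you handle both correctly.
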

\begin{prop}[Function Composition]\label{NA:fn-comp}
	Let $X_1,\dots,X_n$ be NA variables, and let $f_1,\dots,f_k$ be monotone nondecreasing functions defined on disjoint subsets of the variables in $\vec{X}$. Then the variables $f_1(\vec{X}),\dots,f_k(\vec{X})$ are NA.
\end{prop}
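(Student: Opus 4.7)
The plan is to unfold the definition of NA directly, reducing the claim about $Y_i := f_i(\vec{X})$ back to the NA property of the original variables $X_1,\dots,X_n$. Concretely, take any two monotone nondecreasing functions $g$ and $h$ defined on disjoint index sets $I, J \subseteq [k]$ of the $Y_i$'s. I would form the composite functions $\tilde g(\vec X) := g\bigl((f_i(\vec X))_{i\in I}\bigr)$ and $\tilde h(\vec X) := h\bigl((f_j(\vec X))_{j\in J}\bigr)$ viewed directly as functions of $\vec X$, and then apply \Cref{def:NA} to $\tilde g$ and $\tilde h$ to obtain
\[
\E[g\cdot h] \;=\; \E[\tilde g\cdot \tilde h] \;\leq\; \E[\tilde g]\cdot \E[\tilde h] \;=\; \E[g]\cdot\E[h].
\]

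Two routine verifications are needed for this application to be legitimate. First, $\tilde g$ and $\tilde h$ must themselves be monotone nondecreasing in the coordinates of $\vec X$ on which they depend. This follows because each $f_i$ is monotone nondecreasing in its inputs by hypothesis, and $g$ (respectively $h$) is monotone nondecreasing in the $f_i$'s, so the composition inherits monotonicity coordinatewise. Second, $\tilde g$ and $\tilde h$ must depend on disjoint subsets of $\{X_1,\dots,X_n\}$. Here I would invoke the hypothesis that the $f_i$'s are defined on pairwise disjoint subsets $S_i \subseteq [n]$ of the variables in $\vec X$: then $\tilde g$ depends only on indices in $\bigcup_{i\in I} S_i$, $\tilde h$ only on indices in $\bigcup_{j\in J} S_j$, and these two unions are disjoint because $I\cap J=\emptyset$ and all $S_\ell$ are pairwise disjoint.

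There is no serious obstacle — the proof is essentially a one-line reduction. The only substantive step is recognizing that the ``disjoint subsets'' hypothesis on the $f_i$'s is precisely what ensures the disjoint-support condition needed to apply NA of $\vec X$ to $\tilde g$ and $\tilde h$; without it the argument (and indeed the statement) would break. Once that hypothesis is used, the monotonicity of the composition and the single invocation of \Cref{def:NA} finish the proof.
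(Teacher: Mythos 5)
Your proof is correct, and it is the standard argument for this closure property: unfold \Cref{def:NA} for the $Y_i = f_i(\vec X)$, note that $\tilde g = g\circ (f_i)_{i\in I}$ and $\tilde h = h\circ (f_j)_{j\in J}$ are again monotone nondecreasing (composition of nondecreasing maps) and supported on disjoint subsets of $\{X_1,\dots,X_n\}$ (since the $S_i$ are pairwise disjoint and $I\cap J=\emptyset$), and invoke NA of $\vec X$. The paper itself does not prove this proposition — it cites it as a known closure property from \cite{khursheed1981positive,joag1983negative,dubhashi1996balls} — so there is nothing to compare against, but your argument is exactly the one those references give and it is complete.
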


An immediate corollary of negative association, obtained by considering the functions $f(\vec{X})=X_i$ and $g(\vec{X})=X_j$ for $i\neq j$, is pairwise negative correlation.

\begin{prop}[NA implies Negative Correlation]\label{NA:neg-corr}
	Let $X_1,\dots,X_n$ be NA variables. Then, for all $i\neq j$, we have that
	$\Cov(X_i,X_j)\leq 0$. 
\end{prop}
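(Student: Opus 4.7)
The plan is to derive this directly from \Cref{def:NA}, since negative correlation is essentially the weakest instantiation of negative association. Concretely, fix any two distinct indices $i \neq j$, and consider the two functions $f(\vec{X}) := X_i$ and $g(\vec{X}) := X_j$. These are the identity projections onto the $i$-th and $j$-th coordinates respectively, which are trivially monotone nondecreasing, and they are defined on the disjoint index sets $\{i\}$ and $\{j\}$.

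Next, I would simply invoke \Cref{def:NA} on this choice of $f$ and $g$, yielding
\[
\E[X_i \cdot X_j] \;=\; \E[f \cdot g] \;\leq\; \E[f] \cdot \E[g] \;=\; \E[X_i]\cdot \E[X_j].
\]
Rearranging then gives $\Cov(X_i, X_j) = \E[X_i X_j] - \E[X_i]\E[X_j] \leq 0$, as desired.

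There is essentially no obstacle here; the proposition is an immediate specialization of the definition. The only thing worth double-checking is that the two coordinate projections qualify as ``monotone nondecreasing functions on disjoint subsets of variables,'' which is clear since each depends on a single variable and the singleton sets $\{i\}$ and $\{j\}$ are disjoint whenever $i \neq j$. Everything else is a one-line rewriting of the NA inequality as a covariance bound.
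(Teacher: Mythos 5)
Your proof is correct and matches the paper's own reasoning: the paper states this proposition is ``obtained by considering the functions $f(\vec{X})=X_i$ and $g(\vec{X})=X_j$ for $i\neq j$'' in \Cref{def:NA}, which is exactly your argument.
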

\section{Rounding Bipartite Fractional Matchings Online}\label{sec:rounding}

In this section we present an online algorithm which (approximately) rounds a bipartite fractional matching under interleaved vertex arrivals. 
In what follows, we let $c\geq \bestc$ be the largest value below $0.03$ satisfying
\begin{align}\label{def:c}
	(1/2 - c)(1-4c)(1/2 - c - 6c/(1/2-c)) - 2c & \geq 0.
\end{align}
We note that this choice of $c\leq 0.03$ also satisfies the following.\footnote{We encourage the reader to think of $c\to 0$, and note that inequalities \eqref{def:c} and \eqref{properties:c} hold for sufficiently small constant $c>0$. Our choice of $c\approx \bestc$ is simply the largest satisfying all these constraints.}
\begin{align}\label{properties:c}
\min\{1/2-c,\, 1-4c,\, 1 - 6c/(1/2-c)^2\} \geq 0.
\end{align}
We show the following.

\begin{wrapper}
\begin{thm}\label{per-edge-guarantees}
	There exists an online algorithm which, given an (unknown) bipartite graph $G$ under interleaved vertex arrivals, together with a fractional matching $x$ in $G$, outputs a random matching $\calM$ matching each edge $e\in E$ with probability
	\begin{equation}\label{invariant}
	\Pr[e\in \calM] = (1/2+c)\cdot x_e \geq \halfplusbestc \cdot x_e.
	\end{equation}
\end{thm}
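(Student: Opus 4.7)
The plan is to design an online algorithm that, when a vertex $v$ arrives, runs a short randomized procedure over its backward edges $\{(u,v) : u<v\}$, and then to establish the per-edge identity \eqref{invariant} by induction on the arrival order. The algorithm naturally splits into a ``baseline'' layer that already achieves the $1/2$ factor together with an additional independent ``boost'' whose activation probability is calibrated so the total contribution is exactly $(1/2+c)\cdot x_e$. For the baseline, the obvious primitive is to propose each incident backward edge $e=(u,v)$ with probability proportional to $x_e$ and accept it only if $u$ is still free; an inductive argument using the fractional matching constraint at $u$ yields $\Pr[u \text{ free at time } v]\ge 1/2$, which delivers the baseline contribution of $x_e/2$.

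To push above $1/2$, the boost layer should activate only when both endpoints satisfy a ``slack'' condition quantifying how much of their fractional budget has not yet been committed, and should be coupled with the baseline in a way whose overlap is known exactly. The delicate new issue for interleaved arrivals, absent from the one-sided setting of \cite{papadimitriou2021online}, is that $v$'s fractional budget must be shared between backward edges (decided at time $v$) and forward edges (decided after $v$), so the boost at time $v$ is constrained by the expected future load on $v$. Inequality \eqref{def:c} is shaped precisely by these constraints: the factor $(1/2-c)$ represents the residual budget once the boost is included, the $(1-4c)$ reflects the mass not lost to potential conflicts on the way, and the $6c/(1/2-c)$ correction absorbs cross-edge correlations between the two endpoints. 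Together with the positivity conditions \eqref{properties:c}, this fixes the admissible range of $c$, and $c\ge \bestc$ is the best constant that still satisfies \eqref{def:c}.

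The central obstacle will be controlling the positive correlations that interleaved arrivals can create between the baseline and boost events, and here the negative-association toolkit of \Cref{sec:prelimNA} is essential. I would model the algorithm's internal randomness as independent coin flips bundled per arriving vertex, so that the events driving the analysis---``$u$ is still free at time $v$'', ``the boost fires at $v$'', and the analogous indicators at intermediate vertices---become monotone nondecreasing functions of disjoint bundles. Then \Cref{NA:ind-union} and \Cref{NA:fn-comp} make these events NA across vertices, and \Cref{NA:neg-corr} allows any undesired positive cross-correlation to be replaced by zero when bounding $\Pr[e\in \calM]$ from below. Once these correlation estimates are in place, the per-edge computation reduces to an algebraic identity in which \eqref{def:c} is precisely the feasibility condition on the boost parameter; setting that parameter to saturate the identity gives equality in \eqref{invariant} and closes the induction, with the lower bound $\Pr[e\in\calM]\ge \halfplusbestc \cdot x_e$ following from $c\ge \bestc$.
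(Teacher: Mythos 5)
Your high-level plan (calibrated acceptance probabilities so marginals hit $(1/2+c)x_e$ exactly, an inductive argument over arrival order, and the NA toolkit to suppress positive correlations) is in the right direction, and you correctly identify the interleaved-arrivals difficulty that forward edges of $v$ compete with backward edges. But the claim at the heart of your plan — that ``$u$ is still free at time $v$'' and ``the boost fires at $v$'' can be rendered as monotone nondecreasing functions of disjoint bundles of independent coins, and hence NA by \Cref{NA:ind-union} and \Cref{NA:fn-comp} — is false as stated, and it is precisely the obstacle the paper has to engineer around. The freeness/matched events are not monotone in the coins once second picks exist: whether $v$ even attempts a second pick depends on whether its first-pick target was free, which depends on earlier coins in both directions, so these events are not on disjoint bundles and not monotone.

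What the paper does instead, and what your proposal is missing, has two pieces that work together. Algorithmically, an arriving vertex with \emph{low} fractional degree makes only a \emph{single} pick (with probabilities scaled up by $\frac{1/2+c}{1/2-c}$ and acceptance scaled down by $\frac{1/2-c}{g(u,v)}$), while only \emph{high}-degree arrivals get a second, dependent pick; this is engineered so that any matching of a vertex while it still has low fractional degree is caused by a single-choice ``pick'' event. Analytically, the matched indicator is then decomposed as $M_{u,v}=M^L_{u,v}+M^H_{u,v}$: the low-degree part $M^L$ is a monotone OR of 0-1 pick indicators on disjoint coins (using bipartiteness), hence NA and negatively correlated, while the high-degree part $M^H$ is \emph{not} controlled by NA at all — it is bounded separately by $\Pr[M^H_{u,v}]\le 2c$ using the fractional matching constraint. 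Combining these gives the covariance bound $\Cov(F_{u,v},F_{w,v})\le 6c$, which then feeds into showing $p_{u,v}\le 1$. Without the single-pick rule for low-degree arrivals and the $M^L/M^H$ decomposition, the NA argument does not close; also note the ``boost'' (second pick) is not independent of the baseline — it is conditioned on the first pick failing and its acceptance probability $p_{u,v}$ is chosen precisely to saturate the marginal, so thinking of it as an independent layer will lead you astray when you try to prove $p_{u,v}\le 1$.
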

\end{wrapper}

We now turn to describing the algorithm claimed by the above theorem.

\subsection{Intuition and Algorithm}
Before presenting our algorithm, we describe the approach used to obtain \Cref{per-edge-guarantees} under one-sided arrivals \cite{papadimitriou2021online}, and then discuss the new ideas needed to extend this result to interleaved arrivals.

Naturally, an edge $(u,v)$ with $u<v$ (i.e., $v$ arriving later than $u$) can only be matched if $u$ is not already matched before the arrival of $v$. We denote by $F_{u,v}$ the event that $u$ is free (i.e., is not matched in $\calM$) prior to the arrival of $v$. 
The guarantee of \Cref{per-edge-guarantees} implies the following closed form for the probability of this event.
\begin{equation}\label{prob-free}
\Pr[F_{u,v}] = g(u,v) := 1 - \sum_{w < v} (1/2+c)\cdot x_{u,w}.
\end{equation}

To achieve marginal probabilities of $\Pr[(u,v)\in \calM] = (1/2+c)\cdot x_{u,v}$, our first step is to have every arriving vertex $v$ pick a random neighbor $u<v$ with probability $x_{u,v}$, and then, if $u$ is free, we match $(u,v)$ with probability $q_{u,v}:=\min(1,(1/2+c)/g(u,v))$. For neighbors $u$ of low fractional degree upon arrival of $v$, i.e., $\sum_{w<v} x_{u,w} \leq \frac{1/2-c}{1/2+c}$, this last probability is precisely $q_{u,v}=(1/2+c)/\Pr[F_{u,v}]$. Consequently, we match each such edge $(u,v)$ with probability $\Pr[(u,v)\in \calM] = x_{u,v}\cdot \Pr[F_{u,v}]\cdot (1/2+c)/\Pr[F_{u,v}] = (1/2+c)\cdot x_{u,v}$, as desired.
For edges $(u,v)$ for which $u$ has \emph{high} fractional degree, on the other hand, this only gives us $\Pr[(u,v)\in \calM] \geq (1/2-c)\cdot x_{u,v}$, and this can be tight.



To increase the probability of an edge $(u,v)$ to be matched to the desired $(1/2+c)\cdot x_{u,v}$, we repeat this process a second time, making a second pick, if $v$ is not matched after its first pick. 
Here, we must argue that the variables $\{F_{u,v}\mid u<v\}$ do not have strong positive correlation.
Indeed, if, as an extreme case, we had $F_{u,v} = F_{w,v}$ always for all $u,w<v$, and $v$ had only high-degree neighbors (for which $q_{u,v}=1$), then if $v$ is not matched to its first pick, then all its neighbors must be matched, and $v$ is therefore never matched as a second pick. 
This implies that a second pick does not increase $\Pr[(u,v)\in \calM]$ in this case.
As shown in \cite{papadimitriou2021online}, under one-sided arrivals, this problematic scenario does not occur, since the matched status of neighbors of $v$ is rather weak.
For interleaved arrivals, however, the underlying argument does not carry through, as we now explain.

\subsubsection{Extension to Interleaved Arrivals}

The key difference between one-sided and interleaved arrivals is that now we require small positive correlation between the matched statuses of every two nodes on the same side of the bipartition, rather than just nodes on the ``offline side''.
For one-sided arrivals, the weak positive correlation between offline vertices was due to two factors. 
(1) low-degree offline vertices are matched only due to semi-adaptive matching choices, where precisely one neighbor of an arriving online vertex is picked, and at most one is matched. (That is, they are only matched as a first pick.) Therefore, by the 0-1 Principle (\Cref{0-1-NA}) and closure properties of NA distributions (propositions \ref{NA:ind-union} and \ref{NA:fn-comp}), the indicators for a vertex to be matched when it has low fractional degree are NA, and hence are negatively correlated. 
(2) On the other hand, the probability of a node to be matched when it has high degree is low, since each edge is matched with probability $(1/2+c)\cdot x_{u,v}$, and the residual fractional degree when $v$ has high degree is $1-\frac{1/2-c}{1/2+c} = \frac{2c}{1/2+c} \leq 4c$.
Putting (1) and (2) together, we find that the matched statuses of any two offline vertices have small correlation.

Unfortunately, under interleaved arrivals, the above is no longer true. In particular, if a vertex $v$ has low fractional degree upon arrival, it may still be matched as a second pick upon arrival (due to its high-degree neighbors). Consequently, the indicators for vertices on the same side of the bipartition being matched when they have low fractional degree are no longer negatively associated, thus undoing the entire argument used to bound $\Cov(F_{u,v}, F_{w,v})$ for vertices $u,w<v$ on the same side of the bipartition.

To overcome this problem, we have each arriving vertex $v$ with low fractional degree upon arrival only pick once, and rely on its low fractional degree to pick each neighbor with higher probability.
In particular, when such a vertex $v$ arrives, we pick at most one neighbor with probability $x_{u,v}\cdot \frac{1/2+c}{1/2-c}$. (Since $v$ has low fractional degree on arrival, $\sum_{u<v} x_{u,v}\leq \frac{1/2-c}{1/2+c}$, this is well-defined.)
Then, if this picked vertex $u$ is free, we match $(u,v)$ with probability $\frac{1/2-c}{\Pr[F_{u,v}]} = \frac{1/2-c}{g(u,v)}(\leq 1)$, resulting in the edge $(u,v)$ being matched with probability $x_{u,v}\cdot (1/2+c)$. 
Crucially for our analysis, this now allows us to show that the indicators for vertices (in the same side of the graph) to be matched when they have low fractional degree is again negatively associated. This then results in the matched status of vertices again being decomposable into two variables, with the first being negatively correlated, and the second having low probability, from which we obtain that vertices on the same side of the bipartition have low correlation.\footnote{We note that  Gamlath et al.~\cite{gamlath2019online} followed a superficially similar rounding approach, using two choices. 
As they only required bounds on the (unweighted) matching's size, their analysis relied on showing that \emph{globally} positive correlation is low. As we desire high matching probability on an edge-by-edge (or at least vertex-by-vertex) basis, we must follow a more delicate approach.}

This discussion gives rise to \Cref{alg:rounding}, which we prove in this section provides the guarantees of \Cref{per-edge-guarantees}.

\begin{algorithm}[ht]
	\caption{}
	\label{alg:rounding}
	\begin{algorithmic}[1]
		\medskip
		\State \textbf{Init:} $\mathcal{M} \leftarrow \emptyset$ 
		\For{all vertices $v$, on arrival} 
		\State read $\{x_{u,v} \mid u<v \}$
		\If{$\sum_{u<v}x_{u,v} \leq \frac{1/2-c}{1/2+c}$}\label{line:low-deg-start}
		\State pick at most one $u<v$ with probability $x_{u,v}\cdot \frac{1/2+c}{1/2-c}$ \label{line:pick-scaled-up-for-low-deg-neighbor}
		\If{$u\neq \text{nil}$ and $u$ is unmatched in $\calM$}
		\With{\textbf{probability} $\frac{1/2-c}{g(u,v)}$}\label{line:scale-down-for-low-deg-neighbor}
		\State $\calM\gets \calM \cup \{(u,v)\}$ \label{line:low-deg-end}
		\EndWith
		\EndIf
		\Else\label{line:high-deg-start}
		\State pick at most one $u<v$ with probability $x_{u,v}$ \label{line:pick1}
		\If{$u\neq $ nil and $u$ is unmatched in $\mathcal{M}$}
		\With{\textbf{probability} $\min \left( 1, \frac{1/2 + c}{g(u,v)} \right)$} \label{line:probacceptfirstpick}
		\State $\mathcal{M} \leftarrow \mathcal{M} \cup \{ (u,v)\}$ \label{line:updateMfirstpick} \label{line:acceptfirstproposal} 
		\EndWith
		\EndIf
		\If{$v$ is still unmatched in $\mathcal{M}$}\label{line:second-pick-start}
		\State pick at most one $u<v$ with probability $x_{u,v}$ \label{line:pick2}
		\If{$u\neq $ nil and $u$ is unmatched in $\mathcal{M}$}
		\With{\textbf{probability} $p_{u,v}$ guaranteeing $\Pr[(u,v)\in \calM] = (1/2+c)\cdot x_{u,v}$} \label{line:probacceptsecondpick} 
		\State $\mathcal{M} \leftarrow \mathcal{M} \cup \{ (u,v)\}$ \label{line:acceptsecondproposal} \label{line:high-deg-end}
		\EndWith
		\EndIf
		\EndIf
		\EndIf
		\EndFor				
		\State \textbf{Output} $\mathcal{M}$
	\end{algorithmic}
\end{algorithm}


\subsection{High-Level Analysis}

For our analysis and proof of \Cref{per-edge-guarantees}, we will assume, by way of an inductive proof, that \Cref{invariant} holds for all edges $(u,w)$ with $u,w<v$ and therefore that for each $u<v$ we have $\Pr[F_{u,v}] = g(u,v)$, as stated in \Cref{prob-free}.

Given the inductive hypothesis, it is easy to verify that \Cref{alg:rounding} guarantees marginal probabilities of each edge to be matched to be precisely $(1/2+c)\cdot x_e$. 
Indeed, for an arriving vertex $v$ with low fractional degree, $\sum_{u<v}x_{u,v} \leq \frac{1/2-c}{1/2+c}$ (lines \ref{line:low-deg-start}-\ref{line:low-deg-end}), since by the inductive hypothesis $u$ is free at time $v$ with probability $\Pr[F_{u,v}] = g(u,v)$, we have that
$$\Pr[(u,v)\in \calM] = x_{u,v}\cdot \frac{1/2+c}{1/2-c} \cdot g(u,v)\cdot \frac{1/2-c}{g(u,v)} = (1/2+c)\cdot x_{u,v}.$$
In the alternative case of lines \ref{line:high-deg-start}-\ref{line:high-deg-end}, we trivially have that each edge $(u,v)$ with $u<v$ is matched with probability precisely $\Pr[(u,v)\in \calM] = (1/2+c)\cdot x_e$, due to lines \ref{line:probacceptsecondpick}-\ref{line:high-deg-end}.
The crux of the analysis, then, is in proving that this algorithm is well-defined, and in particular that there exists some probabilities $p_{u,v}$ as stated in Line \ref{line:probacceptsecondpick}.

We note that all probabilistic lines in the algorithm except for Line \ref{line:probacceptsecondpick} are trivially well-defined.
First, if $v$ has low fractional degree before time $v$, i.e., $\sum_{u<v}x_{u,v} \leq \frac{1/2-c}{1/2+c}$, then the probability of any neighbor to be picked in Line \ref{line:pick-scaled-up-for-low-deg-neighbor} is at most $\sum_{u<v} x_{u,v}\cdot \frac{1/2+c}{1/2-c} \leq 1$, and so this line is well-defined.
Next, by the fractional matching constraint, we have that $\sum_{u<v} x_{u,v}\leq 1$, and consequently lines \ref{line:pick1} and \ref{line:pick2} are well-defined.
Finally, by the fractional matching constraint, we have that $\sum_{w < v} (1/2+c)\cdot x_{u,w}\leq 1/2+c$, and therefore 
\begin{align}\label{prob-matched-bound}
\Pr[F_{u,v}] = g(u,v) \geq 1/2-c.
\end{align}
Consequently, the term $\frac{1/2-c}{g(u,v)}$ in Line \ref{line:scale-down-for-low-deg-neighbor} is indeed a probability, by our choice of $c=\bestc \leq 1/2$.
We now turn to proving that probabilities $p_{u,v}$ as stated in Line \ref{line:probacceptsecondpick} do indeed exist.

First, to show that $p_{u,v}\geq 0$, we must show that the probability of edge $(u,v)$ to be matched as a first pick in Line \ref{line:acceptfirstproposal} does not on its own exceed $(1/2+c)\cdot x_{u,v}$.
\begin{obs}\label{obs:first-pick-UB}
	The probability of an edge $(u,v)$ to be matched in Line \ref{line:acceptfirstproposal} is at most
	$$\Pr[(u,v) \textrm{ added to $\calM$ in Line \ref{line:acceptfirstproposal}}] \leq (1/2+c)\cdot x_{u,v}.$$
\end{obs}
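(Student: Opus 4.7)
The plan is to compute the probability directly by chaining three independent events and then apply the inductive hypothesis. Observe that Line \ref{line:acceptfirstproposal} can only be executed in the high-degree branch (lines \ref{line:high-deg-start}--\ref{line:high-deg-end}), where $v$'s first pick is made via Line \ref{line:pick1}. Thus the event that $(u,v)$ is added in Line \ref{line:acceptfirstproposal} decomposes as: (i) $v$ picks $u$ as its first pick, with probability $x_{u,v}$; (ii) $u$ is free at time $v$, i.e., $F_{u,v}$ occurs; and (iii) the coin flip of Line \ref{line:probacceptfirstpick} comes up heads, with probability $\min\left(1,\frac{1/2+c}{g(u,v)}\right)$. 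The first pick of $v$ uses fresh randomness drawn at time $v$ and is independent of the history that determines $F_{u,v}$, and the coin flip in Line \ref{line:probacceptfirstpick} is likewise independent of both.

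Invoking the inductive hypothesis (\Cref{prob-free}), we have $\Pr[F_{u,v}] = g(u,v)$. Multiplying the three probabilities and cancelling yields
\begin{equation*}
\Pr[(u,v) \text{ added to } \calM \text{ in Line \ref{line:acceptfirstproposal}}] = x_{u,v}\cdot g(u,v)\cdot \min\!\left(1,\,\tfrac{1/2+c}{g(u,v)}\right) = x_{u,v}\cdot \min\bigl(g(u,v),\,1/2+c\bigr).
\end{equation*}
Since $\min(g(u,v),\,1/2+c)\leq 1/2+c$ trivially, the claimed bound follows.

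There is essentially no obstacle here: the only subtleties are making explicit the independence between $v$'s pick/accept coins and the history determining $F_{u,v}$, and being careful that the inductive hypothesis applies because $u<v$ ensures all relevant prior marginals $\Pr[(u,w)\in\calM]=(1/2+c)x_{u,w}$ for $w<v$ are already established. The statement is in effect a sanity check ensuring that $p_{u,v}\geq 0$ in Line \ref{line:probacceptsecondpick}, and so this short calculation is exactly what is needed before proceeding to the substantive arguments showing $p_{u,v}\leq 1$.
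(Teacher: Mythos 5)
Your proof is correct and matches the paper's argument: both decompose the event into the first-pick probability $x_{u,v}$, the free-event probability $g(u,v)$ from the inductive hypothesis, and the acceptance probability $\min(1,(1/2+c)/g(u,v))$, then multiply and bound. You simply spell out the independence reasoning more explicitly than the paper's one-line calculation.
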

\begin{proof}
	By the inductive hypothesis, we have that $\Pr[F_{u,v}] = g(u,v)$. Consequently, 
	\begin{align*}
	\Pr[(u,v)\textrm{ added to $\calM$ in Line \ref{line:acceptfirstproposal}}] &= x_{u,v} \cdot \min\left(1,\frac{1/2+c}{g(u,v)}\right)\cdot g(u,v) \leq (1/2+c)\cdot x_{u,v}.\qedhere
	\end{align*}
\end{proof}
\begin{cor}\label{puv>=0}
	The parameter $p_{u,v}$ in Line \ref{line:probacceptsecondpick} satisfies $p_{u,v}\geq 0$.
\end{cor}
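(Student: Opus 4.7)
The plan is a short, direct deduction from Observation \ref{obs:first-pick-UB}. By construction, the value $p_{u,v}$ is chosen so that the \emph{total} probability that $(u,v)$ is added to $\calM$ (whether in Line \ref{line:acceptfirstproposal} or in Line \ref{line:acceptsecondproposal}) equals $(1/2+c)\cdot x_{u,v}$. Thus I want to exhibit $p_{u,v}$ as a ratio whose numerator is a non-negative ``shortfall'' and whose denominator is a non-negative probability.

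First I would write the second-pick contribution in the form
\[
\Pr[(u,v) \text{ added in Line \ref{line:acceptsecondproposal}}] \;=\; p_{u,v}\cdot Q_{u,v},
\]
where $Q_{u,v}$ is the probability of the joint event that $v$ is unmatched after its first pick, that $u$ is chosen as $v$'s second pick (which occurs with probability $x_{u,v}$ conditional on the unmatched event), and that $u$ is still free at time $v$. Since $Q_{u,v}$ is a probability, $Q_{u,v}\geq 0$. Solving the defining equation of $p_{u,v}$ in Line \ref{line:probacceptsecondpick} yields
\[
p_{u,v} \;=\; \frac{(1/2+c)\cdot x_{u,v} \;-\; \Pr[(u,v)\text{ added in Line \ref{line:acceptfirstproposal}}]}{Q_{u,v}},
\]
whenever $Q_{u,v}>0$ (and we may set $p_{u,v}=0$ otherwise, as the choice is then immaterial).

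The key step is now immediate: by Observation \ref{obs:first-pick-UB}, the numerator is non-negative, and since the denominator is non-negative as well, we conclude $p_{u,v}\geq 0$.

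I do not expect any real obstacle here; the content is entirely absorbed into Observation \ref{obs:first-pick-UB}. The genuinely delicate bound, namely $p_{u,v}\leq 1$, is a separate matter and is not claimed by this corollary, so I would flag it only as a pointer for the subsequent analysis, where it will require showing that $Q_{u,v}$ is not too small relative to the remaining marginal mass --- in particular, controlling the positive correlation between $F_{u,v}$ and the event that $v$ is unmatched after its first pick, which is precisely the main technical task taken up later in \Cref{sec:rounding}.
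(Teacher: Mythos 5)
Your proposal is correct and matches the paper's approach: the paper states \Cref{puv>=0} as an immediate consequence of \Cref{obs:first-pick-UB} with no further proof, and your unpacking — writing the second-pick contribution as $p_{u,v}\cdot Q_{u,v}$ and solving the defining equation to expose a non-negative numerator (by \Cref{obs:first-pick-UB}) over a non-negative denominator — is exactly the intended deduction. Your closing remark correctly identifies that $p_{u,v}\leq 1$ is the nontrivial part, deferred to the covariance analysis.
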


The core of the analysis will then be in proving that $p_{u,v}\leq 1$. For this, we will need to argue that a second pick in lines \ref{line:second-pick-start}-\ref{line:high-deg-end} is likely to result in $(u,v)$ being matched, provided we set $p_{u,v}\leq 1$ high enough.  We prove as much in the next section.

\subsection{Core of the Analysis}\label{sec:core}

In this section we prove that the second pick is likely to result in a match. 
To this end, we prove that the matched statuses of neighbors of an arriving vertex $v$ have low positive correlation (if any).
More formally, if $G=(V_1,V_2,E)$ is our bipartite graph, 
we will prove the following.

\begin{lem}\label{bounded-covariance}
	For any $i=1,2$, vertex $v$ and vertices $u,w<v$ with $u,w\in V_i$, 
	$$\Cov(F_{u,v}, F_{w,v})\leq 6c.$$
\end{lem}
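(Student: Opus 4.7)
The plan is to decompose each indicator $1-F_{u,v}$ according to whether the corresponding vertex was matched in its ``low fractional degree'' or ``high fractional degree'' regime, and handle these two parts separately. Concretely, write $1-F_{u,v} = M_u^L + M_u^H$, where $M_u^L$ (resp.\ $M_u^H$) indicates that $u$ was matched at some time $t<v$ at which $u$'s fractional degree was at most (resp.\ strictly greater than) $\frac{1/2-c}{1/2+c}$. Because $u$ is matched at most once, $M_u^L + M_u^H \leq 1$ and this is a disjoint decomposition; the same holds for $w$. The strategy is that the ``high'' parts $M^H$ carry only small probability mass, while the ``low'' parts $M^L$ are negatively correlated for same-side vertices.

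For the bound on $M^H$, I would invoke the inductive hypothesis behind \eqref{invariant}, which says that for each edge $(u,v')$ with $v'<v$, $\Pr[(u,v')\in\calM] = (1/2+c)\cdot x_{u,v'}$. Summing over only those edges processed while $u$ was in its high-degree regime, and observing that the total $x$-mass contributed by those edges is at most $1-\frac{1/2-c}{1/2+c} = \frac{2c}{1/2+c}$, I get
\[
\Pr[M_u^H] \;\leq\; (1/2+c)\cdot \frac{2c}{1/2+c} \;=\; 2c,
\]
and identically $\Pr[M_w^H]\leq 2c$.

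The main technical step is showing $\Cov(M_u^L, M_w^L) \leq 0$ for $u,w \in V_i$. The key geometric observation is that since $u$ and $w$ lie on the same side of the bipartition, they are non-adjacent, so no single pick event $\pi^{(i)}_{a,b}$ of the algorithm can cause a match touching both $u$ and $w$. I would model the algorithm's randomness as (a) for each arriving $v$ and each pick index $i\in\{1,2\}$, an independent one-hot vector $\pi^{(i)}_v$ specifying which neighbor $v$ picks (or nil)---one-hot vectors are NA by \Cref{0-1-NA}---together with (b) independent acceptance coins $A^{(i)}_{u,v}$; by \Cref{NA:ind-union}, the full collection of base variables is NA. I would then argue that $M_u^L$ can be written as a monotone nondecreasing function of the subset $S_u$ of base variables touching edges incident to $u$, and symmetrically $M_w^L$ on the disjoint subset $S_w$. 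Function composition (\Cref{NA:fn-comp}) then promotes $M_u^L$ and $M_w^L$ to NA, and in particular (\Cref{NA:neg-corr}) to negatively correlated.

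Putting the pieces together via bilinearity,
\[
\Cov(F_{u,v},F_{w,v}) \;=\; \Cov(M_u^L + M_u^H,\; M_w^L + M_w^H)
\]
expands into four terms: the $(L,L)$ term is $\leq 0$, and each of the three terms involving an $M^H$ factor satisfies $\Cov(X,Y)\leq \E[XY]\leq \min(\E[X],\E[Y])\leq 2c$ for $\{0,1\}$-valued $X,Y$. Summing yields $\Cov(F_{u,v},F_{w,v})\leq 0 + 3\cdot 2c = 6c$. The main obstacle is verifying the monotone-function/disjoint-subset structure that powers step three: in the high-degree branch, the second pick's activation is contingent on the first pick failing to match, which threatens to couple pick events across same-side vertices. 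I would address this by decoupling---sampling both $\pi^{(1)}_v$ and $\pi^{(2)}_v$ independently, with the second only contributing a match when $v$ is not yet matched---and observing that, crucially, the pick vectors on which $M_u^L$ and $M_w^L$ depend in a ``positive'' (increasing) way are still disjoint, so the composition lemma applies.
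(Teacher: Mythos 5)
Your overall architecture is exactly the paper's: decompose $1-F_{u,v} = M^L_{u,v}+M^H_{u,v}$ by the regime of $u$'s fractional degree at the time of matching, show $\Pr[M^H]\leq 2c$ from the inductive hypothesis and the fractional matching constraint, show $\Cov(M^L_u,M^L_w)\leq 0$ for same-side $u,w$ via negative association, and expand the covariance bilinearly into $0+2c+2c+2c$. The $M^H$ bound and the bilinear expansion are correct and match the paper.

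The gap is in the NA step, and you have correctly identified where the difficulty lies but proposed a fix that does not work. If $M^L_{u,v}$ could be realized via a second pick of some high-degree arriving vertex $z$, then $M^L_{u,v}$ would depend on whether $z$ was \emph{not} matched by its first pick --- a condition that depends \emph{negatively} on the one-hot first-pick coordinates $A_{w',z}$ for $w'\neq u$, in particular on $A_{w,z}$ when $w$ is also a neighbor of $z$. Meanwhile $M^L_{w,v}$ depends \emph{positively} on $A_{w,z}$. So $M^L_{u,v}$ and $M^L_{w,v}$ are not monotone functions of \emph{disjoint} subsets of the base NA variables, and \Cref{NA:fn-comp} simply does not apply; ``restricting attention to the coordinates of positive dependence'' is not a well-defined function-composition argument, since $M^L_{u,v}$ genuinely changes value as $A_{w,z}$ flips. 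Your proposed decoupling (sampling $\pi^{(1)}_v,\pi^{(2)}_v$ independently) changes the coupling of the randomness but not this dependency structure.

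The observation that actually closes this gap --- and which the paper uses --- is that second picks can \emph{never} trigger an $M^L$ event. The acceptance probability $p_{u,v}$ in \Cref{line:probacceptsecondpick} is defined to top up $\Pr[(u,v)\in\calM]$ to exactly $(1/2+c)x_{u,v}$; but when $u$ has low fractional degree at time $v$, $g(u,v)\geq 1/2+c$, so $q_{u,v}=(1/2+c)/g(u,v)\leq 1$ and the first pick alone already achieves $x_{u,v}\cdot q_{u,v}\cdot g(u,v)=(1/2+c)x_{u,v}$; hence $p_{u,v}=0$. Combined with the fact that low-degree arriving vertices only make one pick (\Cref{line:low-deg-start}--\ref{line:low-deg-end}), every matching event counted by $M^L_{u,v}$ comes from a first-pick line (\Cref{line:low-deg-end} or \Cref{line:acceptfirstproposal}). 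One can then write $M^L_{u,v}$ as a monotone nondecreasing (disjunction) function of the variables $A_{w,u}$ for first picks incident to $u$, which are disjoint across same-side $u$ by bipartiteness, and \Cref{NA:fn-comp} applies cleanly. Without this observation your proof of $\Cov(M^L_u,M^L_w)\leq 0$ does not go through.
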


Since the covariance of two binary variables $A$ and $B$ is equal to that of their complements, $\Cov(A,B)=\Cov(1-A, 1-B)$, we will concern ourselves with bounding $\Cov(M_{u,v}, M_{w,v})$, where $M_{u,v}:=1-F_{u,v}$ is an indicator for $u$ being matched in $\calM$ before $v$ arrives.

For this proof, we write $M_{u,v}$ as the sum of two Bernoulli variables, $M_{u,v} = M^L_{u,v} + M^H_{u,v}$. 
The indicators $M^L_{u,v}$ and $M^H_{u,v}$ correspond to $u$ being matched to some neighbor $w$ at a time $z$ when $u$ had low or high fractional degree, respectively. That is, 
\begin{align*}
M^L_{u,v} := \mathds{I}\left[(u,w)\in \calM \textrm{ for some } w<v \textrm{ with } \sum_{z<\min\{u,w\}} x_{u,z} \leq \frac{1/2-c}{1/2+c}\right],
\end{align*}
with $M^H_{u,v} = M_{u,v} - M^L_{u,v}$ defined analogously.

In what follows, we will show that for any vertex $v$ and index $i=1,2$, the variables $\{M^L_{u,v} \mid u\in V_{i}\}$ are negatively correlated, while the variables $\{M^H_{u,v} \mid u\in V_{i}\}$ have low probability, which implies that they have low positive correlation with any other binary variable. These bounds will allow us to bound the correlation of the sums $M_{u,v} = M^L_{u,v} + M^H_{u,v}$.

We start by proving the negative correlation between $M^L_{u,v}$ variables, and indeed proving negative association of these variables. 


\begin{lem}\label{matching-NA}
	For any $i=1,2$ and vertex $v$, the variables $\{M^L_{u,v} \mid u<v,\,\, u\in V_{i}\}$ are NA.
\end{lem}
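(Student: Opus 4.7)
The plan is to combine the principle of deferred decisions with the $0$-$1$ principle (\Cref{0-1-NA}) and the closure properties of NA (Propositions~\ref{NA:ind-union} and~\ref{NA:fn-comp}), powered by the observation that each low-degree arrival contributes at most one new match.

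First, I would decouple the algorithm's randomness across arrival times: for each $t<v$, let $R_t$ encode all random choices made at time $t$. The $R_t$'s are independent across $t$, and whether $t$ enters the low-degree branch (lines~\ref{line:low-deg-start}--\ref{line:low-deg-end}) is deterministic in $x$. The crucial structural fact is that at each low-degree arrival $t$, at most one edge is added to $\calM$, so at most one coordinate of $\{M^L_{u,v} : u \in V_i,\, u<v\}$ can be activated at that step: namely $u = P_t$ if $t \in V_{3-i}$, or $u = t$ if $t \in V_i$. I would encode this ``at-most-one activation'' as a $0$-$1$ indicator vector $\vec B^{(t)} = (B^{(t)}_u)_u$ determined by $R_t$ alone: $B^{(t)}_u = 1$ iff the low-degree pick at time $t$ targets $u$ and the acceptance coin succeeds, ignoring whether $u$ happens to still be unmatched. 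Since $\sum_u B^{(t)}_u \leq 1$, the $0$-$1$ principle gives NA of each $\vec B^{(t)}$, and since distinct $\vec B^{(t)}$'s depend on disjoint portions of randomness, independent union yields NA of the joint family $\{B^{(t)}_u\}_{t,u}$.

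Next, I would bootstrap from NA of the $\vec B^{(t)}$'s to NA of the $M^L_{u,v}$'s by induction on arrival times. Writing $M^{L,\leq t}_{u,v}$ for the partial indicator through arrival $t$, the one-step update reads
\[
M^{L,\leq t}_{u,v} \;=\; M^{L,\leq t-1}_{u,v} \;\vee\; \bigl(B^{(t)}_u \wedge \mathds{1}[u \text{ is unmatched in } \calM \text{ before time } t]\bigr).
\]
Via the boolean identity $A \vee (B \wedge (1-A) \wedge C) = A \vee (B \wedge C)$, the ``no earlier low-degree match on $u$'' portion of the unmatched check can be absorbed into $M^{L,\leq t-1}_{u,v}$, simplifying the update to $M^{L,\leq t}_{u,v} = M^{L,\leq t-1}_{u,v} \vee \bigl(B^{(t)}_u \wedge (1-H^{<t}_u)\bigr)$, where $H^{<t}_u$ tracks whether $u$ was matched via the high-degree branch by time $t$. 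Since this per-coordinate update touches only $u$'s slot in each of the three families, monotone function composition on disjoint coordinate subsets (\Cref{NA:fn-comp}) preserves NA at each inductive step.

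The main obstacle will be handling the high-degree history indicators $H^{<t}_u$, since they arise from the two-pick high-degree branch (which does not directly satisfy an ``at-most-one-per-step'' activation structure) and can potentially couple the low-degree updates across coordinates. I plan to handle them by conditioning on the entire high-degree match schedule --- which is generated by randomness disjoint from the $R_t$'s driving the $\vec B^{(t)}$'s --- so that under this conditioning the $H^{<t}_u$'s become deterministic and the inductive argument goes through conditionally. The delicate final step of lifting conditional NA to unconditional NA will exploit that $M^L_{u,v}$ depends on the high-degree schedule only through the single coordinate $u$ (a high-degree match on $u$ can merely force $M^L_{u,v}=0$, without affecting $M^L_{u',v}$ for $u' \neq u$), enabling one final application of function composition on disjoint coordinate subsets.
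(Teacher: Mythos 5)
Your proposal has the right ingredients (the $0$-$1$ principle, independent union, function composition on disjoint coordinate sets via bipartiteness) but misses the single structural observation that makes the paper's argument go through, and as a result it hits an obstacle that you then try to route around with an argument that does not hold up.

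The missing observation is this: a second pick (Line~\ref{line:pick2}--\ref{line:high-deg-end}) can never match $u$ at a time when $u$ has low fractional degree, because at such a time $g(u,\cdot)\ge 1/2+c$, so the first-pick branch already achieves the full target probability $(1/2+c)\cdot x_{u,\cdot}$, forcing $p_{u,\cdot}=0$. Consequently $M^L_{u,v}$ depends \emph{only} on first-pick randomness: it equals $\bigvee_{w\le z'} A_{w,u}$, where $A_{w,u}$ records the first-pick-and-coin events in lines~\ref{line:pick-scaled-up-for-low-deg-neighbor}/\ref{line:pick1} and \ref{line:scale-down-for-low-deg-neighbor}/\ref{line:probacceptfirstpick}. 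The ``unmatched'' check is automatically satisfied at the first such $w$, because any earlier match of $u$ (while still at low fractional degree) would itself have to come from an earlier first-pick event, contradicting minimality. Once $M^L_{u,v}$ is a pure monotone function of the first-pick variables, the $0$-$1$ principle, independent union and function composition give NA directly, with no residual coupling to deal with.

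Two concrete gaps in your version. First, you define $\vec B^{(t)}$ only at \emph{low-degree arrivals} $t$. But a first pick made by a \emph{high-degree} arrival (Line~\ref{line:pick1}) can also match a neighbor $u$ that still has low fractional degree, and that event contributes to $M^L_{u,v}$. Your decomposition into ``low-degree-branch'' vs.\ ``high-degree-branch'' does not align with the paper's decomposition into ``first pick'' vs.\ ``second pick''; the latter is the one that respects the structure of $M^L$. Second, your plan to ``condition on the entire high-degree match schedule'' and then lift conditional NA to unconditional NA is not valid: the high-degree matches are \emph{not} generated by randomness disjoint from the low-degree picks (whether a vertex is free for a high-degree match depends on the low-degree matches, and vice versa), and NA is not preserved under conditioning on a dependent event, nor does conditional NA plus your final ``function composition'' step recover unconditional NA. Once you use that second picks are switched off precisely for low-fractional-degree partners, the $H^{<t}_u$ indicator you are worried about is identically $0$ on the relevant prefix, and the whole conditioning detour becomes unnecessary.
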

By \Cref{NA:neg-corr}, this implies that the above variables are negatively correlated.
\begin{cor}\label{neg-cor-M1}
	For any $i=1,2$, vertex $v$ and earlier vertices $u,w<v$ with $u,w\in V_i$, 
	$$\Cov(M^L_{u,v}, M^L_{w,v})\leq 0.
	$$
\end{cor}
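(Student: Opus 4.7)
The plan is to exhibit each $M^L_{u,v}$ (for $u\in V_i$) as a monotone nondecreasing function of a pairwise disjoint slice of an underlying collection of NA random variables, and then invoke Function Composition (\Cref{NA:fn-comp}).

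First, I would isolate the elementary algorithmic randomness driving $M^L$. At each arrival $s$ in the low-degree branch, let $P_{s,u}$ (for $u<s$) indicate which neighbor $s$ picks in Line~\ref{line:pick-scaled-up-for-low-deg-neighbor}, and let $A_{s,u}$ be the independent accept-coin of Line~\ref{line:scale-down-for-low-deg-neighbor}. Since at most one neighbor is picked at each such $s$, we have $\sum_u P_{s,u}\leq 1$, and the 0-1 Principle (\Cref{0-1-NA}) yields NA of $\{P_{s,u}\}_u$. Across distinct arrivals the pick-coins use independent randomness, and the accept-coins are jointly independent, so by Independent Union (\Cref{NA:ind-union}) the full family $\{P_{s,u},A_{s,u}\}$ is NA; any additional independent randomness used by the high-degree branch may be adjoined without disturbing this.

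Next, for each $u\in V_i$ I would define the slice
\[
\mathcal{X}_u := \{(P_{s,u},A_{s,u}) : s>u,\,s\text{ a low-degree arrival}\}\cup\{(P_{u,w},A_{u,w}) : w<u\},
\]
which collects all pick-and-accept variables whose indices explicitly involve $u$ --- either with $u$ as the vertex picked at some later arrival $s\in V_{3-i}$, or with $u$ as the low-degree picker at its own arrival. Because $G$ is bipartite and $u,u'\in V_i$ share no edge, the slices $\mathcal{X}_u,\mathcal{X}_{u'}$ for distinct $u\neq u'\in V_i$ are disjoint: any common variable would be indexed by a pair $(u',u)$ or $(u,u')$, forcing one of $u,u'$ to lie in $V_{3-i}$, a contradiction. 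The decisive step is then to show that $M^L_{u,v}$ is a monotone nondecreasing function of $\mathcal{X}_u$ alone; heuristically, the condition $\sum_{z<\min(u,w)} x_{u,z}\leq \frac{1/2-c}{1/2+c}$ embedded in $M^L_{u,v}$ is calibrated so that every match contributing to $M^L_{u,v}=1$ is witnessed by a successful pick-accept event in $\mathcal{X}_u$, and flipping any such event from $0$ to $1$ can only (weakly) accelerate the moment at which $u$ becomes matched under the low-degree condition, never erasing the event $M^L_{u,v}=1$. Function Composition (\Cref{NA:fn-comp}) applied across these disjoint slices then yields NA of $\{M^L_{u,v}\}_{u\in V_i}$.

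The main obstacle, which I expect will require the bulk of the care, is establishing the monotone representation of $M^L_{u,v}$ in the coordinates of $\mathcal{X}_u$. Two subtleties must be handled: first, the freeness condition inside $\calM$ can, in principle, couple $u$'s match (when $u$ at its own arrival picks some $w$ and requires $w$ to still be unmatched) to the history of the chosen neighbor $w$; and second, a match that $u$ receives via the high-degree branch could a priori interfere with a later low-degree witness of $M^L_{u,v}$. The threshold $\frac{1/2-c}{1/2+c}$ together with the use of $\min(u,w)$ in the definition of $M^L_{u,v}$ are precisely tuned to rule out such confounders, but making this rigorous --- and confirming genuine monotonicity in each coordinate of $\mathcal{X}_u$ --- is the technical crux of the argument.
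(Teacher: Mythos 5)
Your high-level plan is exactly the paper's: express each $M^L_{u,v}$ as a monotone nondecreasing function of a disjoint slice of a jointly NA family built from the pick and accept coins, invoke Function Composition (\Cref{NA:fn-comp}), and finish with \Cref{NA:neg-corr}. The disjointness-via-bipartiteness observation is also the one the paper uses. However, there are two substantive gaps.

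First, your slice $\mathcal{X}_u$ omits randomness that $M^L_{u,v}$ genuinely depends on. You define $P_{s,u}$ and $A_{s,u}$ only for arrivals $s$ that take the low-degree branch (Line~\ref{line:pick-scaled-up-for-low-deg-neighbor}/\ref{line:scale-down-for-low-deg-neighbor}), and correspondingly restrict $\mathcal{X}_u$ to $s$ ``a low-degree arrival.'' But a high-degree arrival $s>u$ can pick $u$ as its \emph{first} pick in Lines~\ref{line:pick1}/\ref{line:probacceptfirstpick} while $u$ still has low fractional degree; such a match contributes to $M^L_{u,v}$ but sits outside $\mathcal{X}_u$. The paper's bundled indicators $A_{w,z}$ are defined across Lines~\ref{line:pick-scaled-up-for-low-deg-neighbor} \emph{and} \ref{line:pick1}, precisely to cover both branches' first picks. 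Adjoining the high-degree branch's coins ``without disturbing NA'' is fine for the union, but then you must also put the relevant ones into $\mathcal{X}_u$, and re-verify disjointness for the slices.

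Second, you identify but do not close the actual crux: showing $M^L_{u,v}$ is a (monotone) \emph{function} of $\mathcal{X}_u$ alone. This is not merely a matter of ``genuine monotonicity in each coordinate'' — a priori, $M^L_{u,v}$ appears to depend on the matched status of $u$'s picked neighbor $w$ (your first subtlety), which is itself a function of a great deal of randomness outside your slice; that would break the ``disjoint subsets'' hypothesis of \Cref{NA:fn-comp}, not just monotonicity. The paper resolves this with a specific device: declare the accept coin to be flipped \emph{before} the freeness check, define $A_{w,z}$ as ``$z$ picks $w$ \emph{and the coin passes}'' (ignoring freeness), and then argue that $M^L_{u,v}$ is \emph{exactly} the disjunction $\bigvee_{w\leq z'} A_{w,u}$, where $z'$ is the smaller of $v-1$ and the first time $u$ has high fractional degree — so that $M^L_{u,v}$ literally does not depend on any freeness event, and hence depends only on a clean, bipartiteness-disjoint slice. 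Your write-up gestures at this (``precisely tuned'', ``the technical crux'') but never produces the representation, and, given the missing high-degree first-pick coordinates, the representation as you have set it up cannot hold. As written, the proposal is a correct outline but not a proof.
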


\begin{proof}[Proof of \Cref{matching-NA}]
	Recall that $M^L_{u,v}$ is an indicator for $u$ being matched before arrival of $v$ before it has high fractional degree.
	By definition of \Cref{alg:rounding}, this implies that a matching event accounted for by $M^L_{u,v}$ can only occur in lines \ref{line:low-deg-end} or \ref{line:acceptfirstproposal}.
    Such matches occur due to $u$ picking a neighbor or being picked as a neighbor in line \ref{line:pick-scaled-up-for-low-deg-neighbor} or \ref{line:pick1}, and the probabilistic test in line \ref{line:scale-down-for-low-deg-neighbor} or \ref{line:probacceptfirstpick} (respectively), passing, if the picked vertex was previously unmatched in $\calM$.
    We imagine we perform the probabilistic tests in lines \ref{line:scale-down-for-low-deg-neighbor} and \ref{line:probacceptfirstpick} \emph{before} testing whether the picked vertex was unmatched in $\calM$.
    
    For vertices $w<z$, let $A_{w,z}$ be an indicator for $z$ picking $w$ in line \ref{line:pick-scaled-up-for-low-deg-neighbor} or \ref{line:pick1}, and the probabilistic test in line \ref{line:scale-down-for-low-deg-neighbor} or \ref{line:probacceptfirstpick} (respectively) passing.
	Then, by the 0-1 Principle (\Cref{0-1-NA}), we have that for any vertex $z$, the variables $\{A_{w,z} \mid w<z\}$ are NA.
	Moreover, the families of variables $\{A_{w,z} \mid w<z\}$ for distinct $z$ are NA. Therefore, by closure of NA under independent union (\Cref{NA:ind-union}), the variables $\{A_{w,z} \mid z,\, w<z\}$ are NA.
	For notational simplicity, letting $A_{z,w}:=A_{w,z}$ for $z>w$ (recall that we only defined $A_{w,z}$ for $w<z$), we find that if $z'$ is the smaller of $v-1$ and the first time $z$ that $u$ has high fractional degree, the variables $M^L_{u,v}$ are precisely equal to 
	\begin{align*}
	M^L_{u,v} := \bigvee_{w \leq z'} A_{w,u}.
	\end{align*}
	Indeed, this is due to $u$ being matched while it has low fractional degree upon the first time that it is picked by a neighbor (or it picks a neighbor) in line \ref{line:pick-scaled-up-for-low-deg-neighbor} or \ref{line:pick1}, and the corresponding probabilistic test in line \ref{line:scale-down-for-low-deg-neighbor} or \ref{line:probacceptfirstpick} passes.
	Therefore, by closure of NA under monotone function composition (\Cref{NA:fn-comp}), the variables $\{M^L_{u,v} \mid u\in V_i\}$, which are monotone nondecreasing functions of disjoint subsets of the variables $A_{w,u}$ by bipartiteness, are NA.\footnote{This is the only place in our analysis where we use bipartiteness.}
\end{proof}

We now turn to upper bounding the probability of the event $M^H_{u,v}$.

\begin{lem}\label{low-prob-M2}
	For any edge $(u,v)$ with $u<v$, we have that 
	$\Pr[M^H_{u,v}] \leq 2c.$
\end{lem}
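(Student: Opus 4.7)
The plan is to deduce the bound directly from the inductive hypothesis \eqref{invariant} and the fractional matching constraint $\sum_z x_{u,z} \leq 1$. Writing $\theta := \frac{1/2-c}{1/2+c}$, $D(u,t) := \sum_{z<t} x_{u,z}$ (nondecreasing in $t$), and $S^H := \{w<v : D(u, \min\{u,w\}) > \theta\}$ for the set of ``high'' $w$'s appearing in the definition of $M^H_{u,v}$, I would first observe that since $u$ is matched in $\calM$ to at most one neighbor, the events $\{(u,w)\in \calM\}$ for $w<v$ are mutually exclusive, and hence by the inductive hypothesis
\[
\Pr[M^H_{u,v}] \;=\; \sum_{w\in S^H} \Pr[(u,w)\in \calM] \;=\; (1/2+c)\sum_{w\in S^H} x_{u,w}.
\]
The task thus reduces to showing $\sum_{w\in S^H} x_{u,w} \leq 1-\theta$.

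I would then case-split on whether $u$ arrives in the low regime ($D(u,u)\leq \theta$) or the high regime ($D(u,u)>\theta$). In the former case, membership in $S^H$ fails for every $w<v$: if $w>u$ then $\min\{u,w\}=u$ and $D(u,u)\leq \theta$; if $w<u$ then $\min\{u,w\}=w$ and $D(u,w)\leq D(u,u)\leq \theta$ by monotonicity. Hence $S^H = \emptyset$ and the sum is zero.

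In the latter case, let $w^*$ be the smallest index at which $D(u,\cdot)$ exceeds $\theta$; since $D(u,u)>\theta$ we have $w^* \leq u$ and $D(u,w^*) > \theta$. Membership in $S^H$ then becomes a clean tail: for $w\geq w^*$, $D(u,\min\{u,w\})\geq D(u,w^*)>\theta$, whereas for $w<w^*$ we have $\min\{u,w\}=w$ and $D(u,w)\leq \theta$. Consequently, using $x_{u,u}=0$,
\[
\sum_{w\in S^H} x_{u,w} \;=\; \sum_{w^*\leq w<v}\! x_{u,w} \;=\; D(u,v) - D(u,w^*) \;\leq\; 1 - \theta,
\]
invoking $D(u,v)\leq 1$ (fractional matching constraint) and $D(u,w^*)>\theta$. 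Combining both cases,
\[
\Pr[M^H_{u,v}] \;\leq\; (1/2+c)(1-\theta) \;=\; (1/2+c)-(1/2-c) \;=\; 2c,
\]
as claimed. The main thing to get right is the bookkeeping: one must recognize that the edges contributing to $M^H_{u,v}$ form a suffix of $u$'s incident edges ordered by the ``arrival stamp'' $\min\{u,\cdot\}$, so their total fractional mass is a telescoping tail of the monotone function $D(u,\cdot)$ and is therefore capped by $1-\theta = \frac{2c}{1/2+c}$. This is exactly the intuition from \Cref{sec:core} that once $u$'s fractional degree crosses the threshold $\theta$, only a residual mass of at most $1-\theta$ remains to generate further matches.
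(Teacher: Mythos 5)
Your proof is correct and takes essentially the same route as the paper's, which computes $\Pr[M^H_{u,v}] = \Pr[M_{u,v}] - \Pr[M^L_{u,v}] \le (1/2+c) - (1/2-c) = 2c$ (with the edge case $z_u\ge v$ giving zero): this is just the dual form of your direct bound $\Pr[M^H_{u,v}] = (1/2+c)\sum_{w\in S^H} x_{u,w} \le (1/2+c)\cdot\frac{2c}{1/2+c} = 2c$. Your case split on whether $D(u,u)\le\theta$ is marginally sharper bookkeeping than the paper's split on whether $z_u\ge v$, but both rest on the same telescoping-tail observation.
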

\begin{proof}
    Recall that by the inductive hypothesis, $\Pr[(u,w)\in \calM]=(1/2+c)\cdot x_{u,w}$. On the other hand, by the fractional matching constraint, we have that $\sum_{w<v} x_{u,v} \leq 1$, and therefore $\Pr[M_{u,v}]\leq 1/2+c$. On the other hand, if we denote by $z_u$ the first time $u$ has high fractional degree, then either $z_u\geq v$, in which case $\Pr[M^H_{u,v}]=0$, or
    \begin{align*}\Pr[M^L_{u,v}] \geq \sum_{w<z_u} x_{u,w}\cdot (1/2+c) \geq \frac{1/2-c}{1/2+c}\cdot (1/2+c) = 1/2-c,
    \end{align*}
    in which case we have
    \begin{align*}
        \Pr[M^H_{u,v}] & = \Pr[M_{u,v}] - \Pr[M^L_{u,v}] \leq 2c. \qedhere
    \end{align*}
\end{proof}

We are now ready to prove \Cref{bounded-covariance}, whereby vertices $u,w$ on the same side of the bipartition have weakly correlated matched statuses, namely $\Cov(F_{u,v}, F_{w,v}) \leq 6c$.
\begin{proof}
	By definition of covariance, the binary variables $F_{u,v}$ and $F_{w,v}$ satisfy $\Cov(F_{u,v}, F_{w,v}) = \Cov(1-F_{u,v}, 1-F_{w,v}) = \Cov(M_{u,v}, M_{w,v})$ (see \Cref{covariance-of-complements}). We therefore turn to upper bounding the covariance of the variables $M_{u,v}$ and $M_{w,v}$.
	
	By the additive law of covariance, the covariance of the variables $M_{u,v} = M^L_{u,v} + M^H_{u,v}$ and $M_{w,v} = M^L_{w,v} + M^H_{w,v}$, denoted by $(\star) = \Cov(M_{u,v}, M_{w,v})$, satisfies
	\begin{align*}
	(\star) & = \Cov(M^L_{u,v} + M^H_{u,v}\,,\, M^L_{w,v} + M^H_{w,v}) \\
	& = \Cov(M^L_{u,v}, M^L_{w,v}) + \Cov(M^L_{u,v}, M^H_{w,v}) + \Cov(M^H_{u,v}, M^L_{w,v}) + \Cov(M^H_{u,v}, M^H_{w,v}) \\
	& \leq 0 + \Pr[M^L_{u,v}, M^H_{w,v}] + \Pr[M^H_{u,v}, M^L_{w,v}] + \Pr[M^H_{u,v}, M^H_{w,v}]  \\
	& \leq 0 + \Pr[M^H_{w,v}] + \Pr[M^H_{u,v}] + \Pr[M^H_{u,v}] \\
	& \leq 6c.
	\end{align*}
	Here, the first inequality follows from \Cref{neg-cor-M1}, the second inequality follows from the trivial bound on covariance of Bernoulli variables $A$ and $B$ given by $\Cov(A,B) = \Pr[A,B] - \Pr[A]\cdot \Pr[B] \leq \Pr[A,B]\leq \Pr[A]$, and the final inequality follows from \Cref{low-prob-M2}.
\end{proof}

\Cref{bounded-covariance} now allows us to argue that if $u$ has high degree upon arrival of $v$, 
then $F_{u,v}$ is nearly independent of the event $R_v$, whereby $v$ is rejected (not matched) after its first pick of $u_1$ (possibly $u_1= \textrm{nil}$). In particular, we have the following.
\begin{lem}\label{free-reject-pick-w-prob}
	Let $u<v$ be a vertex of high fractional degree, $\sum_{w<v} x_{u,w}$, upon arrival of $v$. Then, for all $w\neq u$ (including possibly $w=\text{nil}$), we have 
	\begin{align*}
	\Pr[F_{u,v}, R_v, u_1=w] \geq \Pr[F_{u,v}]\cdot \Pr[R_v, u_1 = w]\cdot \left(1-\frac{6c}{(1/2-c)^2}\right).
	\end{align*}
\end{lem}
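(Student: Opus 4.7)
The plan is to split on the value of $u_1$ and use the covariance bound from \Cref{bounded-covariance}. Throughout, I rely on the fact that $v$'s pick coin choosing $u_1$, as well as its independent acceptance coin in Line~\ref{line:probacceptfirstpick}, are drawn at time $v$ and are therefore independent of the past, and in particular of $F_{u,v}$ and $F_{w,v}$. The case $w = \text{nil}$ is trivial: then $R_v$ holds deterministically, and $u_1 = \text{nil}$ is independent of $F_{u,v}$, so $\Pr[F_{u,v}, R_v, u_1 = \text{nil}] = \Pr[R_v, u_1 = \text{nil}]\cdot \Pr[F_{u,v}]$, which already satisfies the claim.

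For $w \notin \{u, \text{nil}\}$, write $q_{w,v} := \min(1, (1/2+c)/g(w,v))$. Given $u_1 = w$, rejection occurs iff $\neg F_{w,v}$ or the acceptance coin fails, so by the independence above,
\begin{align*}
\Pr[R_v, u_1=w] &= x_{w,v}\cdot(1 - q_{w,v}\cdot g(w,v)), \\
\Pr[F_{u,v}, R_v, u_1=w] &= x_{w,v}\cdot\bigl(\Pr[F_{u,v}] - q_{w,v}\cdot \Pr[F_{u,v}, F_{w,v}]\bigr).
\end{align*}
Since $u$ and $w$ are both neighbors of $v$, they lie on the side of the bipartition opposite $v$, so \Cref{bounded-covariance} gives $\Pr[F_{u,v}, F_{w,v}] \leq \Pr[F_{u,v}]\cdot g(w,v) + 6c$. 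Substituting yields
\[
\Pr[F_{u,v}, R_v, u_1=w] \;\geq\; \Pr[F_{u,v}]\cdot \Pr[R_v, u_1=w] \;-\; 6c\cdot x_{w,v}\cdot q_{w,v}.
\]

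It remains to absorb the loss term, i.e., to show $x_{w,v}\cdot q_{w,v} \leq \tfrac{1}{(1/2-c)^2}\cdot\Pr[F_{u,v}]\cdot\Pr[R_v, u_1=w]$. By \Cref{prob-matched-bound}, $\Pr[F_{u,v}] = g(u,v) \geq 1/2-c$, so the task reduces to bounding $q_{w,v}/(1-q_{w,v} g(w,v))$ by $1/(1/2-c)$. A uniform case split on $q_{w,v}$ handles this: if $g(w,v) > 1/2+c$ then $q_{w,v}\cdot g(w,v) = 1/2+c$ and $q_{w,v}\leq 1$; otherwise $q_{w,v} = 1$ and $g(w,v) \leq 1/2+c$; in both cases $1 - q_{w,v}\cdot g(w,v) \geq 1/2-c$ and the bound follows. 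The only genuine subtlety is this uniform treatment of both branches of $q_{w,v}$; everything else is mechanical given the covariance estimate and the independence of $v$'s freshly-drawn coins from the history through time $v-1$.
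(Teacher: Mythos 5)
Your proof is correct and takes essentially the same route as the paper's: split on $u_1$, expand both $\Pr[R_v, u_1=w]$ and $\Pr[F_{u,v}, R_v, u_1=w]$ using the independence of $v$'s fresh coins, invoke the covariance bound of \Cref{bounded-covariance}, and absorb the $6c$ error term using $\Pr[F_{u,v}]\geq 1/2-c$ together with $1 - q_{w,v}\,g(w,v) \geq 1/2-c$. The only cosmetic difference is that you keep the factor $q_{w,v}$ explicit and dispose of it via a uniform case split on which branch of the $\min$ is active, whereas the paper drops it early via $q_{w,v}\leq 1$; the two are algebraically equivalent.
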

\begin{proof}
	For $w=\text{nil}$ the claim follows from the event $u_1=\text{nil}$ implying $R_v$, and being independent of $F_{u,v}$.
	\begin{align*}
	\Pr[F_{u,v}, R_v, u_1=\text{nil}] & = \Pr[F_{u,v}]\cdot \Pr[ R_v, u_1=\text{nil}].
	\end{align*}
	
	Next, let $w<v$ be some neighbor of $v$. If we denote by $q_{w,v} := \min\left(1,\frac{1/2+c}{g(w,v)}\right)$ the probability that $w$ does not reject $v$ if it is picked first and is free, then the probability that $u_1=w$ and $v$ gets rejected in its first pick is 
	\begin{align}\label{prob-Rv-u1=w}
	\Pr[R_v, u_1=w] = x_{w,v}\cdot \left(1- q_{w,v} \cdot \Pr[F_{w,v}]\right)\geq x_{w,v}\cdot (1/2-c),
	\end{align}
	where the inequality follows from $\Pr[F_{w,v}] = g(w,v)$ by \Cref{prob-matched-bound}, which implies that $q_{w,v} \cdot \Pr[F_{w,v}] \leq 1/2+c$.
	Similarly, the probability $u$ is free, $u_1=w$ and $v$ gets rejected in its first pick is 
	\begin{align}
	\Pr[F_{u,v}, R_v, u_1=w] & = x_{w,v}\cdot \left(\Pr[F_{u,v}]-q_{w,v}\cdot \Pr[F_{w,v}, F_{u,v}]\right) \nonumber \\
	& \geq x_{w,v}\cdot \left(\Pr[F_{u,v}]-q_{w,v} \cdot (\Pr[F_{w,v}]\cdot \Pr[F_{u,v}] + 6c)\right), \nonumber 
	\\
	& \geq x_{w,v}\cdot \left(\Pr[F_{u,v}]-q_{w,v} \cdot (\Pr[F_{w,v}]\cdot \Pr[F_{u,v}]) - 6c\right) \nonumber \\
	& \geq x_{w,v}\cdot \Pr[F_{u,v}]\cdot \left(1-q_{wv}\cdot \Pr[F_{w,v}] - \frac{6c}{1/2-c} \right) \nonumber \\
	& \geq \Pr[F_{u,v}]\cdot \Pr[R_v, u_1 = w]\cdot \left(1-\frac{6c}{(1/2-c)^2}\right), \nonumber
	\end{align}
	where the first inequality follows from \Cref{bounded-covariance}, the second inequality follows from the trivial bound $q_{wv}\leq 1$, 
	the third inequality follows from $\Pr[F_{u,v}] = g(u,v)\geq 1/2-c$ by \Cref{prob-free}, and the final inequality follows from \Cref{prob-Rv-u1=w}.
\end{proof}

In what follows we denote by $x_{\text{nil},v} := 1-\sum_{w<v} x_{w,v}$ the probability with which $u_1 = \text{nil}$. 
From \Cref{free-reject-pick-w-prob} and \Cref{prob-Rv-u1=w}, as well as $\Pr[R_v, u_1=\text{nil}] = \Pr[u_1 = \text{nil}] = x_{\text{nil}, v}$, 
we obtain the following lower bound on $\Pr[F_{u,v}, R_v, u_1=w]$ in terms of $x_{w,v}$.
\begin{cor}\label{free-reject-pick-w-prob-as-fn-of-x} 
	For any vertex $v$ and $w$ (possibly $w=\text{nil}$), we have that $$\Pr[F_{u,v}, R_v, u_1 = w] \geq \Pr[F_{u,v}]\cdot x_{w,v}\cdot \left(1/2-c-\frac{6c}{1/2-c}\right).$$
\end{cor}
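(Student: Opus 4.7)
The plan is to prove the corollary by case analysis on whether $w = \text{nil}$ or $w$ is an earlier neighbor of $v$, combining the two main prior results in the section: Lemma \ref{free-reject-pick-w-prob} (near-independence of $F_{u,v}$ and $\{R_v, u_1=w\}$) and the inequality \eqref{prob-Rv-u1=w} (lower bound on $\Pr[R_v, u_1=w]$). In both cases, the target bound $\Pr[F_{u,v}, R_v, u_1=w] \geq \Pr[F_{u,v}] \cdot x_{w,v} \cdot (1/2-c-6c/(1/2-c))$ should fall out by direct multiplication.

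First, I would handle the case $w = \text{nil}$. Since the event $u_1 = \text{nil}$ depends only on coin flips made by $v$ at its arrival (independent of everything used to define $F_{u,v}$) and since $u_1 = \text{nil}$ deterministically implies $R_v$, we get $\Pr[F_{u,v}, R_v, u_1 = \text{nil}] = \Pr[F_{u,v}] \cdot x_{\text{nil}, v}$. It then suffices to note that $1/2 - c - 6c/(1/2-c) \leq 1$; in fact this quantity lies in $[0, 1/2]$, where nonnegativity follows from property \eqref{properties:c} (which gives $6c/(1/2-c)^2 \leq 1$, hence $6c/(1/2-c) \leq 1/2 - c$) and the upper bound is immediate.

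Next, for $w < v$ with $w$ a neighbor of $v$, I would simply chain Lemma \ref{free-reject-pick-w-prob} with \eqref{prob-Rv-u1=w}:
\begin{align*}
\Pr[F_{u,v}, R_v, u_1 = w] &\geq \Pr[F_{u,v}]\cdot \Pr[R_v, u_1 = w] \cdot \left(1 - \frac{6c}{(1/2-c)^2}\right) \\
&\geq \Pr[F_{u,v}]\cdot x_{w,v}\cdot (1/2-c)\cdot \left(1 - \frac{6c}{(1/2-c)^2}\right) \\
&= \Pr[F_{u,v}]\cdot x_{w,v}\cdot \left(1/2 - c - \frac{6c}{1/2-c}\right),
\end{align*}
as desired. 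The algebraic cancellation $(1/2-c)\cdot (1 - 6c/(1/2-c)^2) = 1/2 - c - 6c/(1/2-c)$ is exactly why the lemma and the inequality compose so cleanly into the claimed bound.

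There is essentially no obstacle here; the work was already done in Lemma \ref{free-reject-pick-w-prob}, and this corollary just repackages the bound by absorbing the factor $(1/2-c)$ lower bound on $\Pr[R_v, u_1=w]/x_{w,v}$ into the multiplicative slack. The only mild subtlety is remembering to verify that the nil case gives at least as large a bound as is claimed, which is why I isolate it at the beginning.
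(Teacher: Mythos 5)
Your proposal is correct and follows essentially the same route as the paper: casing on $w=\text{nil}$ versus $w$ a neighbor, and in the latter case chaining Lemma~\ref{free-reject-pick-w-prob} with inequality~\eqref{prob-Rv-u1=w}, exactly as the paper does. The extra step you include (verifying $1/2-c-6c/(1/2-c)\in[0,1]$ via \eqref{properties:c}) is a reasonable bit of due diligence that the paper leaves implicit.
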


Finally, we are ready to prove that $p_{u,v}$ is a probability, and in particular $p_{u,v}\leq 1$.
\begin{lem}
	The parameter $p_{u,v}$ in Line \ref{line:probacceptsecondpick} satisfies $p_{u,v}\in [0,1]$.
\end{lem}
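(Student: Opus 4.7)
The parameter $p_{u,v}$ is only invoked inside the high-fractional-degree branch of \Cref{alg:rounding}, so I restrict attention to that case. Nonnegativity is already settled in \Cref{puv>=0}, leaving only the upper bound $p_{u,v}\le 1$. My plan is to derive a closed-form expression for $p_{u,v}$ by setting the total match probability of $(u,v)$ equal to $(1/2+c)\,x_{u,v}$, and then to upper bound the numerator and lower bound the denominator of that expression, the latter via the key covariance bound \Cref{bounded-covariance} through its corollary \Cref{free-reject-pick-w-prob-as-fn-of-x}.

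The total match probability of $(u,v)$ decomposes into the first-pick contribution (computed in \Cref{obs:first-pick-UB}) and the second-pick contribution. The latter equals $x_{u,v}\cdot \Pr[F_{u,v},R_v]\cdot p_{u,v}$, since the second-pick draw is independent of everything that precedes it, and, crucially, $u$'s status of being free at the second-pick moment (conditional on $R_v$) coincides with $F_{u,v}$ (either $u_1\neq u$, or $u_1=u$ but the match attempt failed, in which case $u$ is still free iff $F_{u,v}$). Setting the sum equal to $(1/2+c)\,x_{u,v}$ and solving yields
\begin{align*}
p_{u,v} \;=\; \frac{(1/2+c) \;-\; \min\bigl(g(u,v),\,1/2+c\bigr)}{\Pr[F_{u,v},R_v]}.
\end{align*}

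If $g(u,v)\ge 1/2+c$ the numerator vanishes and $p_{u,v}=0$, so I may assume $g(u,v)<1/2+c$. Unpacking \Cref{prob-free}, this is exactly the case $\sum_{w<v}x_{u,w}\ge (1/2-c)/(1/2+c)$, i.e., $u$ has high fractional degree just before $v$'s arrival. Combined with the fractional matching constraint $\sum_{w\le v}x_{u,w}\le 1$ at $u$, this forces $x_{u,v}\le 2c/(1/2+c)\le 4c$. The numerator is at most $(1/2+c)-(1/2-c)=2c$ by \eqref{prob-matched-bound}. For the denominator I apply \Cref{free-reject-pick-w-prob-as-fn-of-x} to every $w\neq u$ (including $w=\text{nil}$) and sum, using $\sum_w x_{w,v}=1$:
\begin{align*}
\Pr[F_{u,v},R_v] \;\ge\; \sum_{w\neq u}\Pr[F_{u,v},R_v,u_1=w] \;\ge\; \Pr[F_{u,v}]\cdot(1-x_{u,v})\cdot\bigl(1/2-c - 6c/(1/2-c)\bigr).
\end{align*}
Plugging in $\Pr[F_{u,v}]=g(u,v)\ge 1/2-c$ and $1-x_{u,v}\ge 1-4c$ gives the lower bound $(1/2-c)(1-4c)(1/2-c-6c/(1/2-c))$ on the denominator. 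The ratio is therefore at most $2c/[(1/2-c)(1-4c)(1/2-c-6c/(1/2-c))]$, which is $\le 1$ by the defining inequality \eqref{def:c} of $c$.

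The main subtlety—and the reason a single inequality in the definition of $c$ is exactly the right thing to ask for—is that the very hypothesis ``$g(u,v)<1/2+c$'' that makes the numerator nonzero simultaneously forces $x_{u,v}\le 4c$, which is precisely what lets me replace the naked factor $(1-x_{u,v})$ in the denominator bound by the constant $(1-4c)$ appearing in \eqref{def:c}. Without coupling these two bounds, the argument would not close.
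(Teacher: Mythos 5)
Your proof is correct and uses the same ingredients as the paper: \Cref{puv>=0} for nonnegativity, \Cref{prob-matched-bound}, \Cref{free-reject-pick-w-prob-as-fn-of-x} (hence \Cref{bounded-covariance}), the bound $x_{u,v}\le 4c$ from $u$'s high fractional degree, and the defining inequality \eqref{def:c}. The only presentational difference is that you solve explicitly for $p_{u,v}$ as a ratio and bound numerator and denominator, whereas the paper evaluates the match probability at $p_{u,v}=1$, shows it exceeds $(1/2+c)x_{u,v}$, and concludes by monotonicity; your explicit case split on $g(u,v)\ge 1/2+c$ also makes a step that the paper leaves tacit (``due to vertex $u$ having high degree'') more transparent.
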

\begin{proof}
	Non-negativity of $p_{u,v}$ was proven in \Cref{puv>=0}. We turn to proving that $p_{u,v}\leq 1$ suffices to guarantee $\Pr[(u,v)\in \calM]\geq (1/2+c)\cdot x_{u,v}$, from which we obtain that there exists some $p_{u,v}\in [0,1]$ which results in $\Pr[(u,v)\in \calM] = (1/2+c)\cdot x_{u,v}$.
	
	By \Cref{prob-matched-bound} we have that $\Pr[F_{u,v}] = g(u,v)\geq 1/2-c$, and therefore 
	\begin{align}\label{M1-match-lb}
	\Pr[(u,v)\in \calM \textrm{ in Line \ref{line:acceptfirstproposal}}] = x_{u,v}\cdot \min\left(1,\frac{1/2+c}{g(u,v)}\right)\cdot g(u,v) \geq (1/2-c)\cdot x_{u,v}.
	\end{align}
	We therefore wish to prove that the probability of $(u,v)$ being matched in Line \ref{line:acceptsecondproposal} is at least $2c\cdot x_{u,v}$, for some choice of $p_{u,v}\leq 1$.
	And indeed, 
	\begin{align*}
	\Pr[(u,v)\textrm{ added to $\calM$ in Line \ref{line:acceptsecondproposal}}] 
	& = x_{u,v} \cdot \sum_{w\neq u} \Pr[F_{u,v}, R_v, u_1=w]\cdot p_{u,v} \\
	& \geq x_{u,v}\cdot \Pr[F_{u,v}] \cdot \sum_{w\neq u}  x_{w,v}\cdot \left(1/2-c-\frac{6c}{1/2-c} \right)\cdot p_{u,v} \\
	& \geq x_{u,v}\cdot (1/2-c) \cdot (1-4c)\cdot \left(1/2-c-\frac{6c}{1/2-c} \right)\cdot p_{u,v} \\
	& \geq 2c\cdot x_{u,v},
	\end{align*}
	where the first inequality follows from \Cref{free-reject-pick-w-prob-as-fn-of-x} and \Cref{properties:c}. The second inequality holds due to \Cref{prob-free} implying $\Pr[F_{u,v}] \geq 1/2-c$ and due to vertex $u$ having high degree at time $v$, and therefore by the fractional matching constraint $x_{u,v} \leq 1-\frac{1/2-c}{1/2+c} = \frac{2c}{1/2+c}\leq 4c$, and hence $\sum_{w\neq u} x_{w,v}\geq 1-4c\geq 0$ (again using \Cref{properties:c}). The final inequality holds 
	for $p_{u,v} = 1$ and for our choice of $c$, by \Cref{def:c}.
	
	
	Consequently, combining the above with \Cref{M1-match-lb}, we find that setting $p_{u,v}=1$ results in $(u,v)$ being matched in either Line \ref{line:acceptfirstproposal} or Line \ref{line:acceptsecondproposal} with probability at least 
	\begin{align}\label{prob-for-puv=1}
	\Pr[(u,v)\in \calM] & \geq (1/2+c)\cdot x_{u,v}.
	\end{align}
	As the probability of $(u,v)$ being added to $\calM$ in Line \ref{line:acceptsecondproposal} is monotone increasing in $p_{u,v}$, we conclude that there exists some $p_{u,v}\in [0,1]$ for which \Cref{prob-for-puv=1} holds with equality.
\end{proof}

\noindent\textbf{Conclusion of \Cref{alg:rounding}'s analysis.}
To conclude, \Cref{alg:rounding} is well-defined, and this algorithm outputs a random matching $\calM$ which matches each edge $e$ with probability precisely $\Pr[e\in \calM] = (1/2+c)\cdot x_e$. \Cref{per-edge-guarantees} follows.

\begin{remark}\textbf{Computational Aspects:} 
    As described, the only way we are aware of to implement \Cref{line:probacceptsecondpick} exactly (and in particular, computing all $p_{u,v}$ exactly) is using an exponential-time algorithm maintaining the joint distributions as they evolve. 
	However, a simple modification of the algorithm, resulting in a polynomial-time algorithm with a $(1+o(1))$ additional multiplicative loss in each edge's matching probability, can be readily obtained by approximately estimating the above $p_{u,v}$ up to $(1\pm o(1))$ multiplicative errors, by standard monte carlo methods. 
	As this results in rather cumbersome descriptions and subsequent calculations, and since running time is not our focus, we do not expand on this.
\end{remark}
\section{Putting it all Together}\label{sec:wrap-up}

In this section we prove our main result, \Cref{thm:greedy-suboptimal}, restated below for ease of reference.
\greedysubopt*

\begin{proof}
	For a graph of maximum degree at most $\Delta$, assigning $x$-value $1/\Delta$ to each edge yields a fractional matching.
	Applying \Cref{alg:rounding} to this fractional matching in a bipartite graph under vertex arrivals results in each edge being matched with probability $\halfplusbestc/\Delta$, by \Cref{per-edge-guarantees}.
	Therefore, by \Cref{coloring-to-matching}, there exists an online edge coloring algorithm whose competitive ratio is  $(1/\halfplusbestc+o(1)) \approx \approxratio+o(1)$ w.h.p.~on bipartite graphs of maximum degree $\Delta = \omega(\log n)$ under (interleaved) vertex arrivals.
	Finally, \Cref{random-subgraphs} together with union bound implies that the same competitive ratio (up to $o(1)$ terms) carries over to general graphs under vertex arrivals.
\end{proof}

\begin{remark}
	Our analysis extends to prove the slightly tighter result, whereby there exist constants $c_1,c_2>0$ and a $(2-c_1)$-competitive online algorithm for $n$-node graphs of maximum degree at least $c_2\cdot \log n$ under vertex arrivals. (See \Cref{remark1}.) For brevity's sake, we omit the details.
\end{remark}
\section{Conclusion}
In this work we resolve the longstanding conjecture of Bar-Noy, Motwani and Naor, namely \Cref{conj:bmn}. That is, we show that, while for bounded-degree graphs the greedy algorithm's competitive ratio of $2$ is optimal among online algorithms, for high-degree graphs this is not the case. 

Some natural questions remain. What is the best achievable competitive ratio? Is a ratio of $1+o(1)$ possible, as for one-sided arrivals in bipartite graphs and random-order edge arrivals \cite{cohen2019tight,bhattacharya2021online}? Can the same be achieved under adversarial \emph{edge} arrivals? Bar-Noy et al.~\cite{bar1992greedy} suggested a candidate algorithm for this latter model, but its analysis seems challenging.
Finally, does the online rounding \Cref{alg:rounding} have more applications beyond edge coloring?

\paragraph{Acknowledgements}
We thank Janardhan Kulkarni for drawing our attention to \cite{karloff1987efficient}.
This research is partially supported by NSF award 1812919, ONR award N000141912550, and a gift from Cisco Research.

\section*{Appendix}
\appendix
\section{The Karloff-Shmoys Approach: Online}\label{sec:karloff-shmoys}

Here we substantiate our earlier assertion that $\alpha$-competitive online edge coloring on high-degree graphs is equivalent (up to $o(1)$ terms) to the same task on high-degree \emph{bipartite} graphs.
That is, we outline the proof of \Cref{random-subgraphs}, restated below for ease of reference.

\karloffshmoys*

\begin{proof}
	The general graph edge coloring algorithm relies on the following subroutine for sampling balanced random subgraphs in subgraphs of maximum degree $\Delta' \geq 18\cdot \sqrt{\Delta \log n}$. (Note that $\Delta\geq 18\sqrt{\Delta \log n}$, by the hypothesis, whereby $\Delta = \omega(\log n)$.)
	Assign each vertex to a set $V_i\subseteq V$ with $i=1,2$ chosen uniformly at random. For any vertex $v\in V$, let $d(v)$ denotes the degree of $v$ in $G$, and $D_v$ denotes the (random) degree of $v$ in the random bipartite subgraph $H=H(V_1,V_2,E\cap (V_1\times V_2))$. Then, we have that $\E[D_v] = d(v)/2\leq \Delta'/2$. 
	By Chernoff's Bound (\Cref{chernoff}), since $D_v$ is the sum of independent $\textrm{Bernoulli}(1/2)$ variables, we have that, for $\eps=\sqrt[4]{\log n/\Delta} = o(1)$, 
	\begin{align}\label{degree-bound-recursive-coloring}
	\Pr[D_v \geq (\Delta'/2)\cdot (1+\eps)]\leq \exp\left(\frac{-(\Delta'/2)\cdot \eps^2}{3}\right) \leq \frac{1}{n^3},
	\end{align}
	using $\Delta' \geq 18\cdot \sqrt{\Delta \cdot \log n}$, and consequently $\Delta\cdot \eps^2 \geq 18\log n$.
	The same high-probability bound holds for $d(v)-D_v$, which is identically distributed to $D_v$.

	To achieve an online edge coloring algorithm for $G$ from the above, we apply the $\alpha$-competitive edge coloring algorithm to the random bipartite $H$, and recursively apply the same approach to the random subgraph induced by the edges outside of $H$, namely $G\setminus H = G[E\setminus (V_1\times V_2)]$, until $H$ is guaranteed to have degree at most $18\cdot \sqrt{\Delta\cdot \log n}$ w.h.p.
	We note that this approach can be applied online, by assigning to each vertex $v$ on arrival a side of each of the recursive random bipartitions.
	Moreover, the colors of each recursive level number $\ell$ can be associated with a contiguous set of integers of cardinality $\alpha\cdot \Delta\cdot ((1+\eps)/2)^\ell$, which is the high probability upper bound on the number of colors used in this recursive call.
	Repeating the above recursively for $t:=\log_{2/(1+\eps)} (18 \sqrt{\Delta/\cdot \log n}) \leq \log n$ levels results in a random uncolored subgraph of maximum degree at most $18\sqrt{\Delta \cdot \log n} = o(\Delta)$ w.h.p., which we color greedily.
	
	Taking union bound over the $O(n^2)$ bad events (some vertex degree $D_v$ exceeding $\Delta'\cdot ((1+\eps)/2)$ in a random bipartite subgraph or its complement in a subgraph whose maximum degree is $\Delta'\geq 18\sqrt{\Delta\cdot \log n}$, or any of the bipartite edge coloring algorithms failing to be $\alpha$ competitive on the subgraph it is applied to), we have that w.h.p., the number of colors $C$ used is, as desired, at most
	\begin{align*}
		C & \leq \alpha\cdot \Delta\cdot \frac{1+\eps}{2} + \alpha\cdot \Delta\cdot \left(\frac{1+\eps}{2}\right)^2 
		+ \dots + \alpha\cdot \Delta\cdot \left(\frac{1+\eps}{2}\right)^t + 
		36\cdot \sqrt{\Delta \cdot \log n} \\
		& \leq \alpha\cdot \sum_{i\geq 1} \Delta\cdot \left(\frac{1+\eps}{2}\right)^i + 36\cdot \sqrt{\Delta \cdot \log n} \\
		& = \alpha\cdot \Delta \cdot \frac{1+\eps}{1-\eps} + o(\Delta) \\
		& = (\alpha+o(1))\cdot \Delta. \qedhere
	\end{align*}
\end{proof}

\begin{remark}
	As stated in the introduction, we note that the above reduction from general to bipartite graphs results in bipartite graphs with \emph{interleaved} vertex arrivals.
\end{remark}
\section{Edge Coloring from Random Matchings}\label{sec:reducing-coloring-to-matching}

In this section, we show how to reduce edge coloring in (bipartite) graphs under vertex arrivals to computing a random matching which matches each edge with probability $\Omega(1/\Delta)$.

\coloringtomatching*
\begin{proof}
	If $\alpha>2$, then the claim follows trivially from the greedy algorithm's $2$-competitiveness. We therefore assume $\alpha\leq 2$. We give a subroutine which decreases the uncolored degree of a subgraph of maximum degree $\Delta' \geq 48\cdot \sqrt[4]{\Delta^3 \log n}$ at a rate of one per $\alpha+o(1)$ colors w.h.p. (Note that $\Delta\geq 48\sqrt[4]{\Delta^3 \log n}$, by the hypothesis, whereby $\Delta = \omega(\log n)$.)
	
	Our subroutine is as follows. Let $L:=12\sqrt{\Delta \log n}$ and $\eps:=\sqrt[4]{(\log n)/\Delta} (= o(1) \leq 1/2)$. 
	We note that by our choice of $L$ and $\eps$ and our lower bound on $\Delta'$, we have that 
	\begin{align}\label{L/D}
	4L/\Delta' \leq 48\sqrt{\Delta \log n}/48\sqrt[4]{\Delta^3 \log n} = \sqrt[4]{(\log n)/\Delta} = \eps.
	\end{align}
	For $i=1,\dots,\lceil \alpha \cdot L\rceil$, we run Algorithm $\calA$, which matches each edge with probability at least $(1/\alpha)/\Delta'$, and color all previously-uncolored matched edges in this run of $\calA$ using a new (common) color. 
	Fix a vertex $v$ whose degree in the subgraph is at least $d(v)\geq \Delta' - \lceil \alpha \cdot L\rceil$ and let $X_1,\dots, X_L$ be indicators of $v$ having an edge colored during application $i=1,\dots,\lceil \alpha\cdot L\rceil $ of Algorithm $\calA$.
	Since vertex $v$ can have at most $\lceil \alpha\cdot L\rceil \leq 2\cdot L$ edges colored during these $L$ applications of Algorithm $\calA$, we find that the number of uncolored edges of $v$ at any point during this subroutine is at least $\Delta' - 2\lceil \alpha \cdot L\rceil \geq \Delta' - 4L$, independently of previous random choices. 
	On the other hand, since each uncolored edge is matched (and hence colored) with probability at least $(1/\alpha)/\Delta'$, we have that for any history $\mathcal{H}$ of random choices in applications $1,2,\dots,i-1$ of $\calA$, application $i$ of $\calA$ results in one of the (at least) $\Delta'-4L$ uncolored edges of $v$ being colored with probability at least
	\begin{align}\label{conditional-color-prob}
		\Pr[X_i \mid \mathcal{H}] \geq (1/\alpha)\cdot (\Delta' - 4L)/\Delta' = (1/\alpha)\cdot (1-4L/\Delta') \geq (1/\alpha)\cdot (1-\eps),
	\end{align}	
	where the last inequality relied on \Cref{L/D}.
	Combining \Cref{conditional-color-prob} with standard coupling arguments (\Cref{coupling}) together with a Chernoff Bound (\Cref{chernoff}), we find that the number of colored edges of $v$, denoted by $X:=\sum_i X_i$ satisfies
	\begin{align*}
		\Pr[X \leq L \cdot (1-\eps)^2] & \leq \exp\left(\frac{-L\cdot (1-\eps)\cdot \eps^2}{2}\right) \leq \exp\left(\frac{-L \cdot \eps^2}{4}\right) = \frac{1}{n^3},
	\end{align*}
	where the second inequality follows from $\eps\leq 1/2$ and the equality follows from choice of $L$ and $\eps$.
	Union bounding over the $n$ vertices, we obtain the following high probability bound on the maximum degree of the uncolored subgraph $H$ after the $\lceil \alpha \cdot L\rceil$ applications of $\calA$:
	\begin{align}\label{subroutine-deg-bound}
	\Pr[\Delta(H) \geq \Delta' - L\cdot (1-\eps)^2] \leq \frac{1}{n^2}.
	\end{align}
	
	We now describe how to make use of this subroutine. 
	For $r=1,\dots,\Delta/L$ \emph{phases}, let $\Delta_i := \Delta - (i-1)\cdot L\cdot (1-\eps)^2.$
	If $\Delta_i < 48\sqrt[4]{\Delta^3 \log n}$, apply the greedy coloring. Otherwise, apply the above subroutine with $\Delta'=\Delta_i$.
	A simple inductive argument together with union bound, relying on \Cref{subroutine-deg-bound}, shows that for $i=1,2,\dots,\Delta/L(\leq n)$, the uncolored subgraph after the first $i-1$ phases has maximum degree at most $\Delta'\leq \Delta_i$ w.h.p., or alternatively it has maximum degree at most $\Delta' \leq 48\cdot \sqrt[4]{\Delta^3 \log n} = o(\Delta)$.
	Moreover, each of these $\Delta/L$ phases requires at most $\lceil \alpha\cdot L\rceil\leq \alpha\cdot L + 1$ colors, by definition, and therefore these $\Delta/L$ phases require at most $\alpha\cdot \Delta + \Delta/L = (\alpha+o(1))\cdot \Delta$ colors in total. Finally, after these phases we are guaranteed that the maximum degree of the uncolored subgraph is at most $\min\{ 48\cdot \sqrt[4]{\Delta^3 \log n}, \, \Delta - (\Delta/L)\cdot L\cdot (1-\eps)^2\} = o(\Delta)$. Applying the greedy algorithm to this uncolored subgraph after the $\Delta/L$ phases thus requires a further $2\cdot o(\Delta) =o(\Delta)$ colors. This results in a proper edge coloring using $(\alpha+o(1))\cdot \Delta$ colors w.h.p.
	
	Finally, we note that the above algorithm can be implemented online under vertex arrivals, since $\calA$ works under vertex arrivals. 
	In particular, when a vertex arrives, we perform the next steps of the different copies of Algorithm $\calA$ (with the different settings of $\Delta_i$) on the uncolored subgraphs obtained from each phase, simulating the arrival of a vertex in each such uncolored subgraph.
	Combined with the above, this yields the desired result: an edge coloring algorithm which is $(\alpha+o(1))$-competitive on general $n$-node graphs of maximum degree $\Delta = \omega(\log n)$ under vertex arrivals.
\end{proof}
\begin{remark}
	\Cref{coloring-to-matching} naturally extends to edge arrivals. Unfortunately, no algorithm matching each edge with probability $(1/\alpha)/\Delta$ subject to edge arrivals is currently known for any constant $\alpha<2$.
\end{remark}

\begin{remark}
	The approach of \Cref{coloring-to-matching} only requires matching algorithms which match each edge with probability $(1/\alpha)/\Delta$ for \emph{subgraphs} of the input graph. 
	Consequently, improved matching algorithms, with smaller $\alpha\geq 1$, for any downward-closed family of graphs $\mathcal{F}$ imply a similar improved $(\alpha+o(1))$-competitive edge coloring algorithm for the same family.
\end{remark}
\subsection{Probability Basics}
Here we include, for completeness, a number of basic probabilistic results used in this paper.

\begin{prop}[Chernoff Bound]\label{chernoff}
	Let $X=\sum_i X_i$ be the sum of independent Bernoulli random variables $X_i\sim \text{Bernoulli}(p_i)$, with expectation $\mu:=\E[X]=\sum_i p_i$.
	Then, for any $\eps\in (0,1)$, and $\kappa\geq \mu$,
	$$\Pr[X \geq \kappa\cdot (1+\eps)] \leq \exp\left(\frac{-\kappa \cdot \eps^2}{3}\right).$$
	$$\Pr[X \leq \mu\cdot (1-\eps)] \leq \exp\left(\frac{-\mu\cdot \eps^2}{2}\right).$$	
\end{prop}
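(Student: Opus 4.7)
The plan is to derive both tail bounds by the classical Chernoff method: apply Markov's inequality to $e^{sX}$ for a well-chosen parameter $s$, using independence to control the moment generating function. The starting point is the bound
\[
\E[e^{sX}] \;=\; \prod_i \bigl(1 + p_i(e^s - 1)\bigr) \;\le\; \exp\!\bigl(\mu(e^s - 1)\bigr),
\]
valid for all $s\in\mathbb{R}$ by independence of the $X_i$ and the pointwise inequality $1+y \le e^y$.

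For the upper tail, I would take $s := \ln(1+\eps) > 0$. Combining Markov's inequality on $e^{sX}$ with the MGF bound yields
\[
\Pr[X \ge \kappa(1+\eps)] \;\le\; \frac{\E[e^{sX}]}{e^{s\kappa(1+\eps)}} \;\le\; \exp\!\bigl(\mu\eps - \kappa(1+\eps)\ln(1+\eps)\bigr).
\]
Here the one departure from the textbook statement is that the threshold involves $\kappa \ge \mu$ rather than $\mu$ itself. I would handle this by replacing $\mu\eps$ with the (no smaller) $\kappa\eps$, legitimate since $\mu \le \kappa$, giving the cleaner exponent $\kappa\bigl[\eps - (1+\eps)\ln(1+\eps)\bigr]$. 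The claimed bound then reduces to the standard numerical inequality $(1+\eps)\ln(1+\eps) - \eps \ge \eps^2/3$ on $(0,1)$, which follows from a routine check that $f(\eps) := (1+\eps)\ln(1+\eps) - \eps - \eps^2/3$ satisfies $f(0)=0$ together with $f'(\eps) = \ln(1+\eps) - 2\eps/3 \ge 0$ on $(0,1)$ (noting $f'(0) = 0$, and that $f'$ is concave with $f'(1) = \ln 2 - 2/3 > 0$, so stays nonnegative throughout the interval).

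The lower tail is the mirror argument with $s := \ln(1-\eps) < 0$. Since $s<0$, Markov applied to $e^{sX}$ reverses direction appropriately, and the same MGF bound gives
\[
\Pr[X \le \mu(1-\eps)] \;\le\; \exp\!\bigl(\mu[(1-\eps)\ln(1-\eps) + \eps]\bigr),
\]
with no $\mu\to\kappa$ substitution needed. The stated bound follows from the companion numerical inequality $(1-\eps)\ln(1-\eps) + \eps \le -\eps^2/2$ on $(0,1)$, verified analogously from the Taylor expansion of $\ln(1-\eps)$.

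There is no serious obstacle here; this is a textbook computation which the paper effectively cites for completeness. The only mild bookkeeping subtlety is justifying the $\mu\to\kappa$ substitution in the upper-tail exponent, which is valid because for $s>0$ the quantity $\mu(e^s-1)$ is monotone nondecreasing in $\mu$, so enlarging $\mu$ to $\kappa$ only weakens (hence preserves) the upper bound on the tail probability.
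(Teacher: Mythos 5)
The paper states this proposition in its ``Probability Basics'' appendix without proof, so there is no in-paper argument to compare against; your exponential-moment derivation is the standard one and is the right thing to supply. The upper-tail argument, including the observation that replacing $\mu\eps$ by $\kappa\eps$ is monotone and the concavity check of $f'(\eps)=\ln(1+\eps)-2\eps/3$, is correct.

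There is, however, a sign slip in the lower-tail display. Tracing the Chernoff computation with $s=\ln(1-\eps)<0$ and $a=\mu(1-\eps)$ gives
\[
\Pr[X\le \mu(1-\eps)] \;\le\; \exp\bigl(\mu(e^s-1)-sa\bigr) \;=\; \exp\bigl(-\mu\bigl[\eps+(1-\eps)\ln(1-\eps)\bigr]\bigr),
\]
i.e.\ the exponent is $-\mu[(1-\eps)\ln(1-\eps)+\eps]$, not $+\mu[(1-\eps)\ln(1-\eps)+\eps]$ as you wrote. Correspondingly, the numerical inequality you need is $(1-\eps)\ln(1-\eps)+\eps \ge \eps^2/2$ on $(0,1)$; the inequality $(1-\eps)\ln(1-\eps)+\eps \le -\eps^2/2$ that you state cannot hold, because the left-hand side is in fact nonnegative (indeed $(1-\eps)\ln(1-\eps)+\eps = \sum_{k\ge 2}\frac{\eps^k}{k(k-1)} \ge \eps^2/2$, which also immediately proves the correct inequality and slots right into the corrected exponent). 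With that sign fixed, the argument is complete and matches the standard derivation.
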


\begin{prop}[Coupling]\label{coupling}
	Let $X_1,\dots,X_m$ be binary random variables such that for all $i$ and $\vec{x}\in \{0,1\}^{i-1}$, 
	$$\Pr\left[X_i = 1 \,\Bigg\vert\, \bigwedge_{\ell\in [i-1]} (X_\ell = x_\ell)\right] \geq p_i.$$
	If $\{Y_i\sim Bernoulli(p_i)\}_i$ are independent random variables, then for any $k\in \mathbb{R}$, 
	$$\Pr\left[\sum_i X_i \leq k\right] \leq \Pr\left[\sum_i Y_i \leq k\right].$$
\end{prop}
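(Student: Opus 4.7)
The plan is to prove the stated stochastic dominance by constructing an explicit coupling between the dependent sequence $X_1,\dots,X_m$ and an independent sequence with Bernoulli$(p_i)$ marginals, under which the coupled copies satisfy $X_i \geq Y_i$ pointwise. This will immediately imply $\sum_i X_i \geq \sum_i Y_i$ pointwise, and hence the desired inequality on lower tails.

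First I would introduce i.i.d.\ uniform random variables $U_1,\dots,U_m \sim \mathrm{Uniform}[0,1]$, and set $Y_i := \mathds{1}[U_i \leq p_i]$ so that the $Y_i$ are independent Bernoulli$(p_i)$. Then I would construct the $X_i$ inductively on the same probability space: having defined $X_1,\dots,X_{i-1}$, let
\[
q_i(x_1,\dots,x_{i-1}) := \Pr\!\left[X_i = 1 \,\Big\vert\, \bigwedge_{\ell\in[i-1]}(X_\ell = x_\ell)\right],
\]
and set $X_i := \mathds{1}[U_i \leq q_i(X_1,\dots,X_{i-1})]$. A routine check confirms that the joint law of the constructed $(X_1,\dots,X_m)$ matches the one specified in the hypothesis: conditional on any realization of $X_1,\dots,X_{i-1}$, the fresh variable $U_i$ is independent of the past and uniform, so $X_i=1$ with the prescribed conditional probability $q_i$.

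The key step is the pointwise domination. By the hypothesis, $q_i(x_1,\dots,x_{i-1}) \geq p_i$ for every realization, so the event $\{U_i \leq p_i\}$ is contained in $\{U_i \leq q_i(X_1,\dots,X_{i-1})\}$, i.e., $Y_i \leq X_i$ for every $i$ on every sample point of the coupling. Summing over $i$ gives $\sum_i X_i \geq \sum_i Y_i$ deterministically, and therefore
\[
\Pr\!\left[\sum_i X_i \leq k\right] \;\leq\; \Pr\!\left[\sum_i Y_i \leq k\right]
\]
for every $k\in\mathbb{R}$, which is the claim. Since the marginal law of $(X_1,\dots,X_m)$ under this coupling agrees with the law specified in the proposition, and similarly for $(Y_1,\dots,Y_m)$, the inequality transfers to the original (uncoupled) distributions.

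I do not anticipate any real obstacle: the only subtlety is being careful that the constructed $X_i$ genuinely has the prescribed conditional distribution (which follows from independence and uniformity of $U_i$), and that the conditional probabilities $q_i$ are well-defined whenever the conditioning event has positive probability — on measure-zero histories we may set $q_i$ arbitrarily in $[p_i,1]$ without affecting any stated probability.
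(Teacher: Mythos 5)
Your proof is correct. The paper states \Cref{coupling} in its probability-basics appendix as a standard fact without giving a proof, so there is no in-paper argument to compare against; your construction---realizing both sequences from shared uniforms $U_i$, thresholding at $p_i$ for $Y_i$ and at the conditional probability $q_i(X_1,\dots,X_{i-1})\geq p_i$ for $X_i$, so that $Y_i\leq X_i$ pointwise and the lower-tail inequality follows---is precisely the standard monotone coupling argument such a statement relies on, and your care about matching the prescribed joint law, handling measure-zero histories, and transferring the inequality back to the original (uncoupled) distributions closes all the relevant gaps.
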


\begin{prop}\label{covariance-of-complements}
	Let $A$ and $B$ be Bernoulli random variables. Then $$\Cov(A,B)=\Cov(1-A, 1-B).$$
\end{prop}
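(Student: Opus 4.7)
The plan is to reduce the claim to two elementary properties of covariance: bilinearity and the fact that shifts by constants leave covariance unchanged. Equivalently, I will use that $\Cov(c,X)=0$ whenever $c$ is a (deterministic) constant, together with the two-argument linearity $\Cov(aX+bZ,\,cY+dW) = ac\,\Cov(X,Y)+ad\,\Cov(X,W)+bc\,\Cov(Z,Y)+bd\,\Cov(Z,W)$.

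First I would apply bilinearity to expand
\[
\Cov(1-A,\,1-B) \;=\; \Cov(1,1)-\Cov(1,B)-\Cov(A,1)+\Cov(A,B).
\]
Then I would observe that each of the first three terms is the covariance of a constant with something, and hence is zero. What remains is precisely $\Cov(A,B)$, giving the claimed equality.

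As a sanity check (and an alternative route if one prefers not to appeal to bilinearity as a black box), I would verify the identity by direct computation: expand
\[
\Cov(1-A,\,1-B) \;=\; \E[(1-A)(1-B)] - \E[1-A]\,\E[1-B],
\]
distribute both products using linearity of expectation, and observe that the constant term $1$ and the linear terms $-\E[A]$ and $-\E[B]$ cancel between the two halves, leaving $\E[AB]-\E[A]\E[B]=\Cov(A,B)$.

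There is no real obstacle here. The result does not actually use the Bernoulli hypothesis — it holds for any pair of random variables with finite second moments — and is a special case of the general fact that covariance is invariant under affine reparameterizations of each argument with matching sign, since the sign flips contribute a factor $(-1)\cdot(-1)=1$ and the constant shifts contribute nothing. The Bernoulli setting is stated only because this is the regime in which the identity is invoked elsewhere (e.g., in the proof of \Cref{bounded-covariance}, where one converts between $F_{u,v}$ and $M_{u,v}=1-F_{u,v}$).
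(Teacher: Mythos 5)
Your proof is correct and is the standard argument: expand by bilinearity, discard the three terms involving a constant, and recover $\Cov(A,B)$; the paper simply states this proposition without proof in its appendix of probability basics, so there is no distinct approach to compare against. Your observation that the Bernoulli hypothesis is superfluous is also accurate.
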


\bibliographystyle{plainurl}%
\bibliography{abb,ultimate}

\end{document}